\tikzstyle{ivertex}=[circle,fill=black!25,minimum size=10pt,inner sep=0pt,draw=black!80]
\tikzstyle{wvertex}=[circle,fill=black!15,minimum size=8pt,inner sep=0pt,draw=black!80]
\tikzstyle{svertex}=[circle,fill=black!25,minimum size=7pt,inner sep=0pt,draw=black!80]
\tikzstyle{evertex}=[ellipse,fill=black!25,minimum width=10pt,minimum height=12pt,inner sep=0pt,draw=black!80]
\tikzstyle{bvertex}=[circle,fill=black!25,minimum size=20pt,inner sep=0pt,draw=black!80]
\tikzstyle{edge}=[draw,-]
\newcommand\MyFunction[2]{\State \textbf{function} \textsc{#1}(#2)}
\newcommand{\settboxlength}{%
  \settowidth{\@tempdima}{\algorithmicensure}%
  \setlength{\@tempdima}{\dimexpr\linewidth-\@tempdima-1em+\@totalleftmargin}}
\newcommand{\tbox}[1]{%
  \settboxlength%
  \parbox[t]{\@tempdima}{\strut #1\strut}}
\newcommand{\ceil}[1]{\left\lceil#1\right\rceil}
\newcommand{\floor}[1]{\left\lfloor#1\right\rfloor}
\begin{document}

\begin{frontmatter}

    \title{Bounds and Fixed-Parameter Algorithms for Weighted Improper Coloring\\ (Extended version)}

    \author{Bjarki Agust Gudmundsson \thanksref{bjarkimail}}
    \address{ICE-TCS, School of Computer Science\\Reykjav\'ik University\\ Reykjav\'ik, Iceland}

    \author{Tomas Ken Magnusson \thanksref{tomasmail}}
    \address{ICE-TCS, School of Computer Science\\Reykjav\'ik University\\ Reykjav\'ik, Iceland}

    \author{Bjorn Orri Saemundsson \thanksref{bjornmail}}
    \address{ICE-TCS, School of Computer Science\\Reykjav\'ik University\\ Reykjav\'ik, Iceland}

    \thanks[bjarkimail]{Email: \href{mailto:bjarkig12@ru.is} {\texttt{\normalshape bjarkig12@ru.is}}}
    \thanks[tomasmail]{Email: \href{mailto:tomasm12@ru.is} {\texttt{\normalshape tomasm12@ru.is}}}
    \thanks[bjornmail]{Email: \href{mailto:bjorn12@ru.is} {\texttt{\normalshape bjorn12@ru.is}}}

    \begin{abstract}
        We study the weighted improper coloring problem, a generalization of
        defective coloring. We present some hardness results and in particular
        we show that weighted improper coloring is not fixed-parameter
        tractable when parameterized by pathwidth. We generalize bounds for
        defective coloring to weighted improper coloring and give a bound for
        weighted improper coloring in terms of the sum of edge weights. Finally
        we give fixed-parameter algorithms for weighted improper coloring both
        when parameterized by treewidth and maximum degree and when
        parameterized by treewidth and precision of edge weights. In
        particular, we obtain a linear-time algorithm for weighted improper
        coloring of interval graphs of bounded degree.
    \end{abstract}

    \begin{keyword}
      graph coloring,
      improper coloring,
      defective coloring,
      weighted improper coloring,
      coloring bounds,
      fixed-parameter algorithms
    \end{keyword}
\end{frontmatter}

\clearpage

\section{Introduction}

Graph coloring is a classic subject of both mathematics and computer science
with many practical applications. It was one of Karp's $21$ original
$\mathcal{NP}$-complete problems. Many generalizations of graph coloring, such
as defective coloring, have been studied, but most of those apply to undirected
graphs. In this paper we consider weighted improper coloring, a generalization
of graph coloring for weighted digraphs. Weighted improper coloring has not
received much attention until recently in the context of wireless networks. In
some models for wireless networks, such as the SINR model, communication over
one wireless link can disturb communication over other wireless links. This
disturbance can vary from link to link, and depends on signal strength and
other variables of the surrounding environment.  Furthermore the disturbance
does not need to be symmetric. A scheduling of communications in such a network
can be modeled as a weighted improper coloring, and this is where the
generalization stems from.

In this paper we provide some new bounds, fixed-parameter algorithms and
hardness results for weighted improper coloring, some of which are
generalizations of existing results for defective coloring.

\subsection{Preliminaries}
Let $G=(V,E)$ be an undirected graph. For a vertex $v\in V$ let $d(v)$ denote
the degree of $v$ in $G$, and $d_S(v)$ denote the degree of $v$ in the subgraph
induced by a subset $S \subseteq V$. A \textit{$k$-coloring} $c:V\rightarrow
\{1,\ldots,k\}$ of $G$ is a partition of the vertex set $V$ into $k$
vertex-disjoint subsets, and $c[v]$ denotes the set of vertices that have color
$c(v)$. A $k$-coloring $c$ is \textit{$d$-defective} if for each $v\in V$,
$d_{c[v]}(v) \leq d$. A $d$-defective $k$-coloring is also called a
\textit{$(k,d)$-defective coloring}. Note that ordinary coloring is a special
case of defective coloring where $d=0$, so defective coloring is a proper
generalization of ordinary coloring.

Let $G=(V,E,w)$ be a weighted digraph, where $w:E\rightarrow [0,1]$ is an
associated weight function. For a vertex $v\in V$ let $d^{-}(v) =
\sum_{(u,v)\in E} w(u,v)$ denote its weighted indegree, and $d_S^{-}(v)$ denote
the weighted indegree of $v$ in the subgraph induced by a subset $S\subseteq
V$. Let $\Delta^{-}$ denote maximum of any weighted indegree.

\begin{definition}
    A $k$-coloring $c$ of a weighted digraph $G=(V,E,w)$ is a \textit{weighted
    improper $k$-coloring} if for each vertex $v\in V$, $d^{-}_{c[v]}(v) < 1$.
    The weighted improper chromatic number $\chi_w(G)$ is the minimum number
    $k$ such that $G$ has a weighted improper $k$-coloring with respect to the
    weight function $w$.
\end{definition}

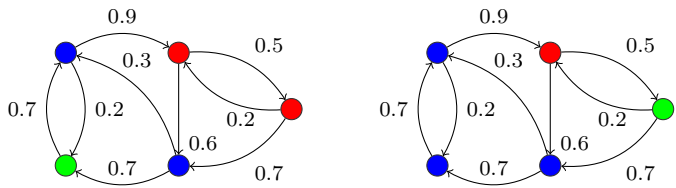
\begin{figure}[h]
  \centering
  \begin{tikzpicture}[auto]
    \node[wvertex,fill=green] (0) at (0,0) {};
    \node[wvertex,fill=blue] (1) at (1.5,0) {};
    \node[wvertex,fill=blue] (2) at (0,1.5) {};
    \node[wvertex,fill=red] (3) at (1.5,1.5) {};
    \node[wvertex,fill=red] (4) at (3,0.75) {};

    \path[->,every node/.style={font=\scriptsize}]
      (1) edge[bend left] node[swap] {$0.7$} (0)
      (0) edge[bend left] node {$0.7$} (2)
      (2) edge[bend left] node {$0.2$} (0)
      (1) edge[bend right] node[swap, near end, yshift=-3pt] {$0.3$} (2)
      (2) edge[bend left] node {$0.9$} (3)
      (3) edge node[very near end] {$0.6$} (1)
      (3) edge[bend left] node {$0.5$} (4)
      (4) edge[bend left] node[very near start, yshift=3pt] {$0.2$} (3)
      (4) edge[bend left] node {$0.7$} (1);
  \end{tikzpicture}
  \quad\quad
  \begin{tikzpicture}[auto]
    \node[wvertex,fill=blue] (0) at (0,0) {};
    \node[wvertex,fill=blue] (1) at (1.5,0) {};
    \node[wvertex,fill=blue] (2) at (0,1.5) {};
    \node[wvertex,fill=red] (3) at (1.5,1.5) {};
    \node[wvertex,fill=green] (4) at (3,0.75) {};

    \path[->,every node/.style={font=\scriptsize}]
      (1) edge[bend left] node[swap] {$0.7$} (0)
      (0) edge[bend left] node {$0.7$} (2)
      (2) edge[bend left] node {$0.2$} (0)
      (1) edge[bend right] node[swap, near end, yshift=-3pt] {$0.3$} (2)
      (2) edge[bend left] node {$0.9$} (3)
      (3) edge node[very near end] {$0.6$} (1)
      (3) edge[bend left] node {$0.5$} (4)
      (4) edge[bend left] node[very near start, yshift=3pt] {$0.2$} (3)
      (4) edge[bend left] node {$0.7$} (1);
  \end{tikzpicture}
  \label{fig:wimpexample}
  \caption{A valid weighted improper coloring of the graph to the left. The
  graph on the right has an invalid coloring as the upper left corner vertex
  has indegree $1$ from same-colored neighbors.}
\end{figure}

Now notice that weighted improper coloring is a generalization of defective
coloring, and by extension a generalization of ordinary coloring, as is
captured by the following lemma.

\begin{lemma}
    \label{thm:def_imp_reduction}
    Defective $(k,d)$-coloring can be reduced to weighted improper $k$-coloring in
    polynomial time.
\end{lemma}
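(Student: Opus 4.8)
The plan is to give a direct, coloring-preserving reduction that keeps both the vertex set and the coloring intact, so that one color assignment serves simultaneously as a defective coloring of the input graph and as a weighted improper coloring of the constructed digraph. Given an instance of defective $(k,d)$-coloring on an undirected graph $G=(V,E)$, I would build a weighted digraph $G'=(V,E',w)$ on the same vertex set by replacing each undirected edge $\{u,v\}\in E$ with the two opposite arcs $(u,v)$ and $(v,u)$, and assigning every arc the uniform weight $w=\frac{1}{d+1}$. This construction is computable in polynomial (indeed linear) time, and since $d\geq 0$ we have $w\in[0,1]$, so $G'$ is a legal weighted improper coloring instance.

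The key idea is that the threshold $\frac{1}{d+1}$ is chosen precisely so that the strict inequality in the definition of weighted improper coloring encodes exactly the defect bound $d$. First I would observe that, because each undirected edge becomes a pair of opposite arcs, the in-neighbours of any vertex $v$ in $G'$ are exactly its neighbours in $G$. Hence for any coloring $c$ and any vertex $v$, the same-colored weighted indegree satisfies $d^{-}_{c[v]}(v)=\frac{1}{d+1}\,d_{c[v]}(v)$, where $d_{c[v]}(v)$ is the integer number of same-colored neighbours of $v$ in $G$.

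The main step is then the equivalence of the two coloring conditions. Because $d_{c[v]}(v)$ is an integer, the weighted improper requirement $d^{-}_{c[v]}(v)<1$ is equivalent to $d_{c[v]}(v)<d+1$, which is equivalent to $d_{c[v]}(v)\leq d$ — exactly the defective requirement. Applying this at every vertex shows that a coloring $c$ is a $(k,d)$-defective coloring of $G$ if and only if it is a weighted improper $k$-coloring of $G'$, so $G$ admits a $(k,d)$-defective coloring iff $G'$ admits a weighted improper $k$-coloring.

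I do not expect a serious obstacle here. The only genuine design choice is the weight: wanting $d$ same-colored neighbours to stay strictly below the threshold while $d+1$ of them reach it forces $w$ into the interval $\bigl[\tfrac{1}{d+1},\tfrac{1}{d}\bigr)$, and $\frac{1}{d+1}$ is the cleanest endpoint, which also handles the boundary case $d=0$ (ordinary proper coloring, weight $1$) correctly. The rest is bookkeeping: confirming that arc-doubling faithfully transfers adjacency in $G$ into weighted indegree in $G'$, and that the strict-versus-non-strict inequalities align precisely because the defect counts are integers.
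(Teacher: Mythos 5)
Your construction is identical to the paper's: replace each undirected edge with two opposite arcs of weight $\frac{1}{d+1}$, and use the integrality of the same-colored neighbour count to show that the strict threshold $d^{-}_{c[v]}(v)<1$ is equivalent to the defect bound $d_{c[v]}(v)\leq d$. The proposal is correct and matches the paper's proof in both the reduction and the justification.
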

\begin{proof}
  Let $G$ be an undirected graph. In a valid defective $(k,d)$-coloring of $G$
  any vertex $v \in V(G)$ can have at most $d$ adjacent vertices of the same
  color. Let $G'$ be the weighted digraph obtained by replacing every
  edge $\{u,v\} \in E(G)$ with two directed edges $(u,v)$ and $(v,u)$ with weight
  $\frac{1}{d + 1}$. Clearly, any vertex $v' \in V(G')$ cannot have more than
  $d$ adjacent vertices of the same color because then we would have
  $d_{c[v']}^-(v') \geq \frac{d + 1}{d + 1} = 1$. However, $v'$ can have up to
  $d$ adjacent vertices of the same color because that yields $d_{c[v']}^-(v')
  \leq \frac{d}{d + 1} < 1$. Thus, any valid weighted improper $k$-coloring of
  $G'$ is also a valid defective $(k,d)$-coloring of $G$ and any invalid
  weighted improper $k$-coloring of $G'$ is an invalid defective
  $(k,d)$-coloring of $G$. The reduction has time complexity $O(|E(G)|)$ and is
  thus a polynomial-time reduction.
\end{proof}

As this reduction does not change the underlying structure of the graph, but
only adds weights to it, we get the following corollary.

\begin{corollary}
    \label{cor:def_imp_reduction_spec}
    Defective $(k,d)$-coloring of any specific class of graphs has a polynomial time reduction to weighted
    improper $k$-coloring of the same class of graphs.
\end{corollary}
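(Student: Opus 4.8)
The plan is to observe that the reduction constructed in the proof of Lemma~\ref{thm:def_imp_reduction} leaves the underlying undirected structure of the input graph untouched, so that membership in any graph class is preserved. First I would make precise what it means for weighted improper coloring to be considered on a ``class of graphs'': since graph classes are ordinarily defined for undirected graphs, I interpret the statement as asserting that defective $(k,d)$-coloring of a graph $G$ in a class $\mathcal{C}$ reduces to weighted improper $k$-coloring of a weighted digraph $G'$ whose underlying undirected graph also lies in $\mathcal{C}$.

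The key step is then to verify that the reduction $G \mapsto G'$ from Lemma~\ref{thm:def_imp_reduction} does not alter the underlying undirected graph. Recall that this reduction replaces each undirected edge $\{u,v\}$ of $G$ by the two antiparallel directed edges $(u,v)$ and $(v,u)$, each of weight $\frac{1}{d+1}$, and keeps the vertex set fixed. Forgetting the orientations and the weights of $G'$ therefore recovers exactly the edge set of $G$: two vertices are adjacent in the underlying graph of $G'$ if and only if they were adjacent in $G$. Hence $G$ and the underlying undirected graph of $G'$ coincide, and in particular if $G \in \mathcal{C}$ then so is the underlying graph of $G'$.

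Combining these observations with Lemma~\ref{thm:def_imp_reduction} completes the argument. That lemma already establishes that valid (respectively invalid) weighted improper $k$-colorings of $G'$ correspond exactly to valid (respectively invalid) defective $(k,d)$-colorings of $G$, and that the construction runs in time $O(|E(G)|)$. Since the construction additionally preserves the graph class, it restricts to a polynomial-time reduction from defective $(k,d)$-coloring on $\mathcal{C}$ to weighted improper $k$-coloring on $\mathcal{C}$.

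I expect the only real subtlety to be the bookkeeping of what ``the same class of graphs'' means for weighted digraphs as opposed to undirected graphs; once the notion of the underlying undirected graph is fixed, the rest is immediate from the lemma, so I anticipate no substantial obstacle.
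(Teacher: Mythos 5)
Your proposal is correct and matches the paper's reasoning exactly: the paper justifies this corollary with the single observation that the reduction of Lemma~\ref{thm:def_imp_reduction} ``does not change the underlying structure of the graph, but only adds weights to it.'' Your additional care in spelling out what ``the same class'' means for weighted digraphs via the underlying undirected graph is a harmless and reasonable elaboration of the same idea.
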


Finally we define a tree decomposition as follows:
\begin{definition}
    \label{def:treedecomp}

    A tree decomposition of a graph $G=(V,E)$ is a pair $(X,T)$, where
    $X=\{X_1,\ldots,X_n\}$ is a collection of subsets of $V$, and $T$ is a tree
    whose vertices are the subsets in $X$, which we will refer to as
    super-vertices. Additionally, the following properties must hold:
    \begin{enumerate}
        \item if $v \in V$, then there exists a subset $X_i$ such that $v\in X_i$, \label{prop:treedecompvertex}
        \item if $(u,v) \in E$, then there exists a subset $X_i$ such that $u\in X_i$ and $v\in X_i$, and \label{prop:treedecompedge}
        \item if $v\in V$ and $X^v \subseteq X$ is the set of subsets that
            contain $v$, then $X^v$ forms a connected subtree of $T$. \label{prop:treedecompsubtree}
    \end{enumerate}

    The width of a tree decomposition $(X,T)$ is $\max_i |X_i|-1$. The treewidth
    of a graph $G$ is the minimum width of any tree decomposition of $G$.
    A path decomposition is a tree decomposition $(X,T)$ where $T$ is a path
    graph. The pathwidth of a graph $G$ is the minimum width of any path
    decomposition of $G$.
    Finally, path and tree decompositions can be extended to digraphs by
    using the underlying graph (i.e.\ by treating directed edges as undirected).
\end{definition}

\textbf{Previous work}\hspace{1em}

Defective coloring and the related $t$-improper coloring were first introduced
by Andrews and Jacobsen \cite{andrews1985generalization}, Harary and Frank \cite{harary1985conditional} and Cowen et al.\ \cite{cowencowen}.
%
Graphs on embeddable surfaces were the main focus of Cowen et al.\
\cite{cowencowen} and Cowen, Goddard and Jeserum \cite{cowen1997defective} and
they
characterized all $(k,d)$ such that planar graphs are $d$-defective
$k$-colorable and produced results for surfaces of higher genus.
Frick and Henning \cite{frick1994extremal} gave extremal results on the
defective chromatic number and Kang and McDiarmid \cite{timpr-random} proved
bounds on the growth of the defective chromatic number of random graphs.  Other
properties than degrees of vertices have been researched regarding defective
coloring and Frick \cite{frick1993survey} gives a good survey on such variations
of the problem. 

The first to propose this edge weighted variation for undirected graphs were
Hoefer, Kesselheim and Vöcking \cite{hoefer2014approximation}. Araujo et
al.\ \cite{araujo} defined the problem with a variable threshold on the degree of
a vertex with respect to its coloring and explored the dual of the problem of
finding such a threshold, in addition to producing results on various grid
graphs.  Similar channel assignment problems have also been considered and
modeled as colorings on vertex weighted graphs \cite{bermond2010improper}.

Tamura et al.\ \cite{tamura} and Archetti et al.\ \cite{archetti2015a} present
a clear formulation of wireless scheduling of the SINR model systems as
directed weighted improper coloring and most of the work has assumed
geometric restrictions on the interference, while some general results have
been published. Recently Halldórsson and Bang-Jensen \cite{hallbang} gave the
essentially tight bound $\chi_w(G) \leq \floor{2\Delta^{-} + 1}$.

\section{Hardness}
Some hardness results about defective coloring have been established. Cowen
and Jeserum \cite{cowen1997defective} proved that $(2,d)$-defective coloring is
$\mathcal{NP}$-complete for $d\geq 1$, even for planar graphs, and that
$(3,1)$-defective coloring is also $\mathcal{NP}$-complete for planar graphs.
Furthermore they show that $(k,d)$-defective coloring is
$\mathcal{NP}$-complete for any $k\geq 3, d\geq 0$.

These results can be carried over to weighted improper coloring, as we show in
the following corollaries.

\begin{corollary}
    Weighted improper $k$-coloring is $\mathcal{NP}$-complete for $k \geq 2$.
  \label{cor:wimpnpcomplete}
\end{corollary}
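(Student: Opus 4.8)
The plan is to prove $\mathcal{NP}$-completeness in the two standard steps: membership in $\mathcal{NP}$, and $\mathcal{NP}$-hardness obtained by transferring the known hardness of defective coloring through Lemma~\ref{thm:def_imp_reduction}. Membership is immediate. A candidate solution is a coloring $c : V \to \{1,\ldots,k\}$, and to verify it we only need, for each vertex $v$, to sum the weights of incoming edges arriving from same-colored neighbors and confirm that $d^{-}_{c[v]}(v) < 1$. This check runs in $O(|E|)$ time, so the problem lies in $\mathcal{NP}$.

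For hardness the key observation is that Lemma~\ref{thm:def_imp_reduction} is a polynomial-time reduction that preserves the number of colors $k$, so any $\mathcal{NP}$-hardness result for defective $(k,d)$-coloring carries over verbatim to weighted improper $k$-coloring. For every $k \geq 3$ I would invoke the cited result that $(k,d)$-defective coloring is $\mathcal{NP}$-complete already for $d = 0$; composing with the lemma (where the edge weight becomes $\frac{1}{d+1} = 1$) immediately yields $\mathcal{NP}$-hardness of weighted improper $k$-coloring for all $k \geq 3$.

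The case $k = 2$ is the delicate one and the step I expect to be the main obstacle, because ordinary $2$-coloring---the $d = 0$ instance---is polynomial-time solvable, being nothing more than a bipartiteness test. Here I would instead reduce from $(2,d)$-defective coloring with $d \geq 1$, which Cowen and Jeserum proved $\mathcal{NP}$-complete; taking $d = 1$ in Lemma~\ref{thm:def_imp_reduction} (edge weight $\frac{1}{2}$) produces a weighted digraph whose valid weighted improper $2$-colorings coincide exactly with the valid $(2,1)$-defective colorings of the source graph. Combining this with the $k \geq 3$ case and membership in $\mathcal{NP}$ gives the claim for all $k \geq 2$. The only real subtlety is selecting the defective parameter $d$ so that the source problem remains hard in each regime; once that choice is made, everything follows mechanically from the reduction already established.
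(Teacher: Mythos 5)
Your proposal is correct and follows essentially the same route as the paper: $\mathcal{NP}$-membership is routine, and hardness is obtained by composing Lemma~\ref{thm:def_imp_reduction} with the known $\mathcal{NP}$-completeness of $(k,d)$-defective coloring for $k\geq 3$, $d\geq 0$ and for $k=2$, $d\geq 1$. The paper's proof is just a terser statement of exactly this argument, including the same careful choice of $d\geq 1$ in the $k=2$ case.
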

\begin{proof}
    This follows from Lemma~\ref{thm:def_imp_reduction} and the fact that
    defective $(k,d)$-coloring is $\mathcal{NP}$-complete for $k=2$ and $d\geq 1$, as
    well as $k \geq 3$ and $d\geq 0$.
\end{proof}

Note that, as there is a simple polynomial-time algorithm for $2$-coloring, it
is a bit surprising that defective $2$-coloring and weighted improper
$2$-coloring are $\mathcal{NP}$-complete.

\begin{corollary}
    \label{cor:planar_np_complete}
    Weighted improper $k$-coloring is $\mathcal{NP}$-complete for planar graphs when $k\in\{2,3\}$.
\end{corollary}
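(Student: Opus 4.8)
The plan is to reduce from known $\mathcal{NP}$-complete cases of defective coloring on planar graphs, exactly as in the proof of Corollary~\ref{cor:wimpnpcomplete}, but now invoking the planarity-preserving form of the reduction. First I would establish membership in $\mathcal{NP}$: given a candidate $k$-coloring $c$ of a weighted planar digraph, one verifies in polynomial time that $d^{-}_{c[v]}(v) < 1$ holds for every vertex $v$ by summing the weights of incoming edges from same-colored neighbors, so the problem lies in $\mathcal{NP}$ for every fixed $k$.

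For hardness I would appeal to Corollary~\ref{cor:def_imp_reduction_spec}, whose whole point is that the reduction of Lemma~\ref{thm:def_imp_reduction} only attaches weights and directions to the existing edges without altering the underlying (undirected) graph. Hence if the input is planar, so is the underlying graph of the resulting weighted digraph, and the reduction stays within the class of planar graphs. It then suffices to cite the appropriate $\mathcal{NP}$-complete instances of planar defective coloring collected earlier in this section.

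Concretely, for $k = 2$ I would take the result of Cowen and Jeserum that $(2,d)$-defective coloring is $\mathcal{NP}$-complete on planar graphs for every $d \geq 1$ (for instance $d = 1$); applying the reduction yields a weighted planar digraph whose valid weighted improper $2$-colorings coincide exactly with the valid $(2,1)$-defective colorings of the original planar graph. For $k = 3$ I would instead invoke their result that $(3,1)$-defective coloring is $\mathcal{NP}$-complete on planar graphs and apply the same reduction. Combining membership with these two hardness arguments gives $\mathcal{NP}$-completeness for $k \in \{2,3\}$.

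The only point requiring care---and it is already discharged by Corollary~\ref{cor:def_imp_reduction_spec}---is confirming that planarity survives the reduction. Since replacing each undirected edge by a pair of oppositely directed weighted edges leaves the underlying simple graph unchanged, there is no genuine obstacle here; the crux is simply selecting, for each target value of $k$, a defect parameter $d$ for which planar $(k,d)$-defective coloring is already known to be hard.
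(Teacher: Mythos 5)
Your proposal is correct and follows essentially the same route as the paper: it invokes Corollary~\ref{cor:def_imp_reduction_spec} together with the Cowen--Goddard--Jesurum hardness results for planar $(2,d)$-defective coloring ($d\geq 1$) and planar $(3,1)$-defective coloring. The only difference is that you also spell out membership in $\mathcal{NP}$, which the paper leaves implicit.
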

\begin{proof}
    This follows from Corollary~\ref{cor:def_imp_reduction_spec} and the fact that
    $(2,d)$-defective coloring is $\mathcal{NP}$-complete for planar graphs when $d\geq 1$, as
    well as the fact that $(3,1)$-defective coloring is $\mathcal{NP}$-complete for planar graphs.
\end{proof}

Since any planar graph is $4$-colorable \cite{appel1980every}, which is a
stricter requirement than being weighted improper $4$-colorable, and as
$1$-colorability is trivial, this, along with
Corollary~\ref{cor:planar_np_complete}, gives complete hardness results for
weighted improper coloring of planar graphs.

Now consider a graph $G$, and say we add any amount of $0$-weight edges to this
graph to make a new graph $G'$. As these $0$-weight edges impose no new
restrictions, it is clear that $G$ is $k$-colorable if and only if $G'$ is
$k$-colorable. In fact, we can add $0$-weight edges until the graph is complete
without adding any restrictions, and hence, in a sense, we can always assume
that the graph is complete. This observation gives rise to the following lemma.

\begin{lemma}
  Let $\mathcal{G}$ be a family of graphs such that for each $n$ the
  complete graph $K_n \in \mathcal{G}$. Then weighted improper $k$-coloring for
  $\mathcal{G}$ is $\mathcal{NP}$-complete.
  \label{lem:cliquesnpcomplete}
\end{lemma}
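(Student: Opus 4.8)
The plan is to give a polynomial-time many-one reduction from the unrestricted weighted improper $k$-coloring problem, which is $\mathcal{NP}$-complete for $k \geq 2$ by Corollary~\ref{cor:wimpnpcomplete}, to weighted improper $k$-coloring restricted to the family $\mathcal{G}$. The entire reduction rests on the observation made just above the lemma: edges of weight $0$ impose no constraint on any weighted improper coloring, so we are free to complete the underlying graph of any instance to a clique without altering which colorings are valid.

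First I would fix $k \geq 2$ and take an arbitrary instance, a weighted digraph $G = (V,E,w)$ with $|V| = n$. I would build $G' = (V,E',w')$ on the same vertex set by adding, for every pair $\{u,v\}$ that is not already adjacent in the underlying graph of $G$, a directed edge of weight $0$, and by letting $w'$ agree with $w$ on $E$ and take the value $0$ on all newly added edges. By construction the underlying graph of $G'$ is $K_n$, which lies in $\mathcal{G}$ by hypothesis, so $G'$ is a legitimate instance of the restricted problem. Since this adds at most $O(n^2)$ edges, $G'$ is computable from $G$ in polynomial time.

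Next I would verify correctness. A coloring $c$ contributes to the weighted indegree $d^{-}_{c[v]}(v)$ only through edges of positive weight, and every edge added in forming $G'$ has weight $0$; hence $d^{-}_{c[v]}(v)$ is identical in $G$ and $G'$ for each vertex $v$ and each coloring $c$. It follows that $c$ is a valid weighted improper $k$-coloring of $G$ if and only if it is one of $G'$, so $G$ is weighted improper $k$-colorable exactly when $G'$ is. Combined with the fact that a candidate coloring can be checked for validity in polynomial time, which places the restricted problem in $\mathcal{NP}$, this yields $\mathcal{NP}$-completeness for $\mathcal{G}$.

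The step requiring the most care is not any single calculation but ensuring that the reduction genuinely lands \emph{inside} the family $\mathcal{G}$: the target instance $G'$ must itself be in the restricted class, which is precisely why the hypothesis that $K_n \in \mathcal{G}$ for \emph{every} $n$ is both necessary and used here. Everything else — membership in $\mathcal{NP}$, the polynomial running time of the construction, and the invariance of the indegree sums under the addition of $0$-weight edges — is routine once this point is settled.
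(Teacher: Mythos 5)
Your proposal is correct and follows essentially the same route as the paper's proof: reduce from general weighted improper $k$-coloring via Corollary~\ref{cor:wimpnpcomplete} by completing the underlying graph with $0$-weight edges, which preserves all indegree sums and lands the instance in $\mathcal{G}$. You merely spell out a few routine details (membership in $\mathcal{NP}$, the $O(n^2)$ bound on added edges) that the paper leaves implicit.
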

\begin{proof}
    We show this by reducing general weighted improper $k$-coloring, which is
    $\mathcal{NP}$-complete by Corollary~\ref{cor:wimpnpcomplete}, to weighted
    improper $k$-coloring for $\mathcal{G}$. Take any weighted digraph $G =
    (V,E,w)$. Create a complete weighted digraph $G'$ from $G$ by adding
    $0$-weight edges between pairs of vertices that are not connected by an
    edge in $G$. Clearly $G$ is $k$-colorable if and only if $G'$ is
    $k$-colorable. Since $G'$ is a complete graph, we see that $G'\in
    \mathcal{G}$, which concludes the reduction.
\end{proof}

In particular, this lemma implies the $\mathcal{NP}$-completeness of weighted
improper coloring for interval graphs, $C_n \times K_t$, and $k$th powers of
paths and cycles, where $t$ and $k$ are unbounded. Later we will give fixed-parameter
algorithms for these and other graph classes.

This lemma can be further generalized by noticing that vertices with $0$-weight
edges can also be added to a graph without adding any restrictions, and hence
graphs with large enough complete subgraphs can be used instead of complete
graphs in the reduction.

\begin{theorem}
    \label{thm:pathwidthhardness}
    Weighted improper coloring is $\mathcal{NP}$-complete for graphs of bounded
    pathwidth.
\end{theorem}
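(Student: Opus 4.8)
The plan is to establish membership in $\mathcal{NP}$ and then give a polynomial-time reduction from \textsc{Partition} that produces graphs of pathwidth $2$, so that the problem is hard already for a \emph{fixed} small pathwidth. Membership is immediate: given a candidate coloring $c$, one checks for every vertex $v$ whether $d^-_{c[v]}(v) < 1$, which costs time linear in $|E|$; since bounding the pathwidth only restricts the admissible inputs, the problem stays in $\mathcal{NP}$ on such graphs, and it therefore suffices to prove $\mathcal{NP}$-hardness.

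For the hardness I would start from an instance $x_1,\dots,x_n$ of \textsc{Partition} with $\sum_i x_i = 2A$, rejecting immediately if this sum is odd or if some $x_i > A$ (both are trivial no-instances). The idea is to use the two colors as the two ``bins'' and to encode the capacity of each bin as the weighted-indegree constraint at a dedicated \emph{collector} vertex. Concretely, I build a weighted digraph $G'$ with two collector vertices $b_1,b_2$ and one \emph{item} vertex $a_i$ for each $x_i$. Between $b_1$ and $b_2$ I place directed edges in both directions of weight $1$; from each $a_i$ I place edges $(a_i,b_1)$ and $(a_i,b_2)$ of weight $w_i = x_i/(A+\tfrac12)$. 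Since $0 < x_i \le A$, each $w_i \in (0,1)$, so the weight function is valid.

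The correctness rests on one observation. The weight-$1$ edges between $b_1$ and $b_2$ force them to receive distinct colors, because two equally-colored collectors would give one of them a same-color indegree of at least $1$, violating the constraint; hence in any valid $2$-coloring the two collectors own the two colors. Each item vertex has no incoming edges and is therefore unconstrained, so it is free to take either color, i.e.\ to be ``placed'' in bin $1$ or bin $2$. The only remaining constraints are at $b_1,b_2$: the collector owning color $j$ has same-color indegree $\sum_{i:\,c(a_i)=j} w_i$, and requiring this to be $<1$ is, after clearing the denominator, exactly $\sum_{i:\,c(a_i)=j} x_i < A+\tfrac12$, which for integers means $\sum_{i:\,c(a_i)=j} x_i \le A$. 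As the total is $2A$, this forces each part to sum to exactly $A$, so a valid weighted improper $2$-coloring of $G'$ exists iff the \textsc{Partition} instance is a yes-instance. This settles the one genuinely delicate point, namely turning the strict inequality $<1$ into the exact threshold condition of \textsc{Partition}, which the half-integer offset $A+\tfrac12$ accomplishes.

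Finally I would bound the pathwidth. Removing $b_1$ and $b_2$ leaves an edgeless graph, and every edge is incident to a collector, so the path decomposition with bags $X_i = \{b_1,b_2,a_i\}$ arranged in a path satisfies all three conditions of Definition~\ref{def:treedecomp}: $b_1,b_2$ lie in every bag (a connected subpath), each $a_i$ lies in its own bag, and each edge is contained in some bag. Every bag has size $3$, so the pathwidth is at most $2$, and $\mathcal{NP}$-completeness holds already on graphs of pathwidth $2$, hence of bounded pathwidth. The main obstacle I anticipate is not the reduction but arguing its two features at once: each per-bin capacity is a \emph{global} sum, yet it must be enforced through the \emph{local} indegree constraint of a single vertex, and doing this with only a constant number of such collector vertices is exactly what keeps the pathwidth bounded.
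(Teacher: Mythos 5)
Your proposal is correct and follows essentially the same route as the paper: a reduction from \textsc{Partition} using two mutually-excluded collector vertices joined by weight-$1$ edges, item vertices attached to both with weights proportional to $x_i$ and normalized so that the strict inequality $d^-_{c[v]}(v)<1$ becomes the threshold $\sum x_i \le A$, and the path decomposition with bags $\{b_1,b_2,a_i\}$ giving pathwidth $2$. The only differences are cosmetic (directed item-to-collector edges and the offset $A+\tfrac12$ in place of the paper's $\epsilon$-perturbed weights $2x_i/X-\epsilon/|S|$); your normalization is arguably the cleaner of the two.
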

\begin{proof}
  Given a multiset $S = \{x_1,\ldots,x_n\}$ of positive integers, the partition
  problem is the task of deciding whether $S$ can be partitioned into two sets
  $S_1,S_2$ such that $\sum_{x \in S_1}x = \sum_{x \in S_2}x$.  We show that
  weighted improper coloring is $\mathcal{NP}$-complete for graphs of bounded
  pathwidth by reducing the partition problem to weighted improper coloring of
  a graph with bounded pathwidth.

  We construct an undirected graph $G = (V,E)$ such that  for every integer
  $x_i \in S$ we add a corresponding vertex $v_i$ to $V$, along with two additional
  vertices $A$ and $B$. We add an undirected edge $\{A,B\}$ with weight $1$ and
  for each vertex $v_i$ we add two undirected edges $\{A, v_i\}$ and $\{v_i,
  B\}$ to the graph with weights $w_i = 2x_i/X - \epsilon/|S|$ where $X =
  \sum_{x\in S} x$. The graph is depicted in
  Figure~\ref{fig:subsetsumcoloring}.

  \begin{figure}[h]
    \centering
    \begin{tikzpicture}
      \node[ivertex] (a) {\scriptsize $A$};
      \node[ivertex] (b) at ($(a) - (0,3)$) {\scriptsize $B$};
      \path (a) edge node [left] {\scriptsize$1$} (b);
      \foreach \x[count=\xi] in {1,2,n} {
          \node[wvertex] (\x) at ($(a) + (\xi - 0.5, -1.5) $) {\scriptsize$v_\x$};
          \path (a) edge node [right,near end] {\scriptsize$w_\x$} (\x);
          \path (b) edge node [right,near end] {\scriptsize$w_\x$} (\x);
      }
      \path (2) -- (n) node [midway] {$\dots$};
    \end{tikzpicture}
    \caption{Subset sum modeled as a weighted improper coloring.}
    \label{fig:subsetsumcoloring}
  \end{figure}
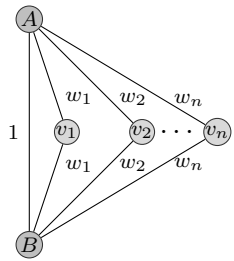

  We claim that $S$ can be partitioned into $S_1$ and $S_2$ if and only if
  there exists a valid weighted improper $2$-coloring of $G$.

  Given a partition of $S$ into $S_1,S_2$ we assign to every vertex $v_i$
  corresponding to an integer $x_i \in S_1$ the color $1$ and conversely we
  assign the color $2$ to every vertex corresponding to an integer $x_i \in
  S_2$. In addition we assign $1$ to $A$ and $2$ to $B$.  For every $v_i$ we
  have $d_{c[v_i]}(v_i) = w_i = 2x_i/X - \epsilon/|S| < 1$ as a single integer
  cannot exceed half of the total sum of the integers, otherwise the partition
  would not be valid.  For $A$ we get
  \[
      d_{c[A]}(A) = \sum_{x_i \in S_1} w_i = \frac{2}{X}\sum_{x_i \in S_1}x_i - \frac{\lvert S_1 \rvert}{|S|} \epsilon  = \frac{2}{X}\cdot\frac{X}{2} - \frac{\lvert S_1 \rvert}{|S|} \epsilon = 1 - \frac{\lvert S_1 \rvert}{|S|} \epsilon < 1.
  \]
  The same argument holds for $B$. Hence the partition yields a valid weighted
  improper $2$-coloring of $G$.

  Given a valid weighted improper coloring of $G$, we let $S_1$ be all those
  vertices in the same color class as $A$ and $S_2$ be all those in the same
  color class as $B$. Then
  \[
      \sum_{x_i \in S_1}x_i = \sum_{v_i \in c[A]}\frac{X}{2}\left(w_i + \frac{\epsilon}{|S|}\right) = \frac{X}{2} \left(\sum_{v_i \in c[A]}w_i + \frac{\lvert S_1 \rvert}{|S|} \epsilon\right) \leq \frac{X}{2} \left(1 - \epsilon + \frac{\lvert S_1 \rvert}{|S|} \epsilon\right) \leq \frac{X}{2},
  \]
  and the same holds for $S_2$.
  For both $S_1,S_2$ we get $\sum_{x \in S_i} x \leq \frac{X}{2}$ and by
  definition $\sum_{x \in S_1} x + \sum_{x \in S_2} x = X$, hence equality holds.

  \begin{figure}[h]
    \centering
    \begin{tikzpicture}
       \node[evertex] (1) {\scriptsize$A,v_1,B$};
       \node[evertex] (2) at ($(1) + (2,0)$) {\scriptsize$A,v_2,B$};
       \node[evertex] (n) at ($(2) + (3,0)$) {\scriptsize$A,v_n,B$};
       \node at ($(2) + (1.5,0)$) {$\dots$};
       \draw (1) -- (2);
       \draw (2) -- ($(2) + (1,0)$);
       \draw (n) -- ($(n) - (1,0)$);
    \end{tikzpicture}
    \caption{Path decomposition of the graph constructed from an instance of the partition problem.}
    \label{fig:subsetsumpathdecomp}
  \end{figure}
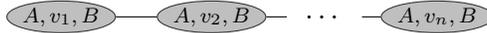

  The graph $G$ has pathwidth $2$ as can be seen by an optimal path
  decomposition of $G$ depicted in Figure~\ref{fig:subsetsumpathdecomp}. Using
  the fact that the partition problem is $\mathcal{NP}$-complete
  \cite{hayes2002computing}, we conclude the proof.
\end{proof}

\section{Bounds}
Prohibiting edges of weight $1$ can make a difference when coloring
weighted graphs. The $3$-regular graph in Figure~\ref{fig:threereg} has
weighted improper chromatic number $3$, even though every vertex has an
incident edge with weight less than $1$.
However, $3$-regular graphs with no edges of weight $1$ have weighted improper
chromatic number at most $2$, as is shown by the following lemma.
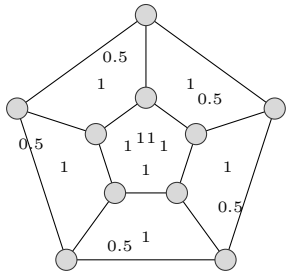
\begin{figure}[h]
  \centering
  \begin{tikzpicture}[auto]
    \foreach \x in {0,...,4}{
      \coordinate (\x) at (\x*72+90:0.7) {};
      \coordinate (0\x) at (\x*72+90:1.8) {};
    }
    \foreach \x in {0,...,4}{
      \draw[edge] (\x) -- node[label={[label distance=-3.0pt]\x*72:\tiny$0.5$}] {} (0\x);
      \pgfmathparse{mod(\x+1,5)}
      \draw[edge] (0\x) -- node[label={[label distance=-3pt]\x*72+126:\tiny$1$}] {} (0\pgfmathresult);
      \draw[edge] (\x) -- node[label={[label distance=-3pt]\x*72+126:\tiny$1$}] {} (\pgfmathresult);
    }
    \foreach \x in {0,...,4}
      \node[wvertex] at (\x) {};
    \foreach \x in {0,...,4}
      \node[wvertex] at (0\x) {};

  \end{tikzpicture}
  \caption{A $3$-regular graph where each vertex has an incident edge with weight less than $1$.}
  \label{fig:threereg}
\end{figure}

\begin{lemma}
  If $G = (V,E,w)$ is a weighted graph such that $w(e) < 1$ for every $e \in E$
  and every vertex has at most $3$ neighbors, then $\chi_w(G) \leq 2$.
\end{lemma}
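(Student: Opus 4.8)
The plan is to use a potential-minimization (local-search) argument. For a $2$-coloring $c$ define its potential to be the total weight of monochromatic edges,
$\Phi(c) = \sum_{\{u,v\}\in E,\ c(u)=c(v)} w(\{u,v\})$.
Since there are only finitely many $2$-colorings, I can fix a coloring $c$ that minimizes $\Phi$, and I will argue that any such minimizer is automatically a valid weighted improper $2$-coloring. Equivalently, one may run the local search directly: repeatedly flip the color of any violating vertex; the analysis below shows each flip strictly decreases $\Phi$, and since $\Phi$ takes finitely many values the process must halt at a valid coloring.

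Suppose toward a contradiction that the minimizing $c$ is not valid, so some vertex $v$ has monochromatic weighted degree $S := d_{c[v]}(v) \geq 1$. The key claim is that flipping the color of $v$ strictly decreases $\Phi$. Writing $W(v)$ for the sum of weights of all edges incident to $v$, flipping $v$ turns its monochromatic incident edges (of total weight $S$) into bichromatic ones and its bichromatic incident edges (of total weight $W(v)-S$) into monochromatic ones, so the potential changes by $\Delta\Phi = (W(v)-S) - S = W(v)-2S$.

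To conclude $\Delta\Phi < 0$ it suffices to show $W(v)-S < 1$, which is precisely the weight $v$ contributes after the flip; combined with $S \geq 1$ this gives $\Delta\Phi = (W(v)-S) - S < 1 - 1 = 0$. This is where the two hypotheses combine: because every edge satisfies $w(e) < 1$, a single monochromatic edge cannot account for $S \geq 1$, so at least two of $v$'s incident edges are monochromatic; and because $v$ has at most three neighbors, this leaves at most one bichromatic incident edge. Hence $W(v)-S$ is the weight of at most one edge, which is strictly less than $1$. This contradicts the minimality of $\Phi$, so $c$ is valid and $\chi_w(G)\leq 2$.

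I expect the crux — and the only step needing real care — to be this counting argument showing that a violating vertex has at most one bichromatic incident edge. It is exactly this fact that forces every flip of a violating vertex to be a strict improvement, and it is where both the degree bound of three and the strict inequality $w(e)<1$ are genuinely used; dropping either hypothesis (for instance allowing an edge of weight exactly $1$, as in Figure~\ref{fig:threereg}) breaks the step and, as noted in the text, can push the chromatic number up to $3$.
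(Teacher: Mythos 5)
Your proof is correct and follows essentially the same local-search strategy as the paper: repeatedly flip the color of a violating vertex and argue termination via a strictly decreasing potential, using the key observation that the degree bound and the strict inequality $w(e)<1$ force a violating vertex to have at least two monochromatic and at most one bichromatic incident edge. The only (immaterial) difference is that you minimize the total \emph{weight} of monochromatic edges and flip only vertices with $d_{c[v]}(v)\geq 1$, whereas the paper counts monochromatic edges and flips any vertex with at least two same-colored neighbors.
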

\begin{proof}
  Start with any $2$-coloring $c:V \rightarrow \{1,2\}$ of $G$. While we have
  some vertex with at least two same-colored neighbors, pick such a vertex $v$
  and flip the color of $v$ to the opposite color. This step reduces the number
  of monochromatic edges by at least $1$ and as there are at most $\lvert E
  \rvert$ monochromatic edges to begin with, the procedure halts after a finite
  number of steps. When the procedure halts, every vertex has at most one
  incident monochromatic edge and as the weights are less than $1$, the degree
  of the vertex with respect to its color is less than $1$.
\end{proof}

We can extend the results of Lov\'asz \cite{lovasz} to our version of the
problem. It is clear that $\frac{\chi(G)}{(d + 1)} \leq \chi_d(G)$
\cite{Kang20081438}, where $\chi_d(G)$ is the $d$-defective chromatic number.
Similar results hold for undirected weighted improper coloring.

\begin{lemma}
  If $G = (V,E,w)$ is an undirected weighted graph with the minimum positive edge weight $w_{min}$, then
  \[
    \ceil{\frac{\chi(G)}{\floor{\frac{1-\epsilon}{w_{\min}}} + 1}} \leq \chi_w(G).
  \]
\end{lemma}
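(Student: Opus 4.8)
The plan is to deduce this lower bound from the known defective-coloring inequality $\frac{\chi(G)}{d+1} \le \chi_d(G)$ quoted just above, once the right defect parameter $d$ has been pinned down. The key observation is that, in an undirected graph every edge of which has weight at least $w_{\min}$, a vertex with $m$ same-colored neighbours has weighted degree within its color class at least $m\,w_{\min}$; the defining condition $d_{c[v]}(v) < 1$ therefore caps the number of same-colored neighbours each vertex may have. Setting $d = \floor{\frac{1-\epsilon}{w_{\min}}}$ for a sufficiently small fixed $\epsilon > 0$, I would first show that this is exactly the largest admissible number of same-colored neighbours: if some vertex had $d+1$ of them, its weighted degree would be at least $(d+1)\,w_{\min}$, and the choice of $d$ is arranged precisely so that $(d+1)\,w_{\min} \ge 1$, contradicting $d_{c[v]}(v) < 1$.

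Granting this, the second step is the inequality $\chi_d(G) \le \chi_w(G)$. Take a weighted improper coloring realizing $\chi_w(G)$; by the first step every vertex has at most $d$ same-colored neighbours, so regarded as a coloring of the underlying unweighted graph it is a $d$-defective $\chi_w(G)$-coloring. Hence the $d$-defective chromatic number needs no more colors than $\chi_w(G)$. Chaining this with the quoted bound gives $\frac{\chi(G)}{d+1} \le \chi_d(G) \le \chi_w(G)$, and since $\chi_w(G)$ is an integer I may round the left-hand side up, obtaining $\ceil{\frac{\chi(G)}{\floor{(1-\epsilon)/w_{\min}}+1}} \le \chi_w(G)$, as claimed.

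The delicate point — and the step I expect to require the most care — is the floor/ceiling bookkeeping hidden in the first step, which is where $\epsilon$ earns its keep. The naive estimate only yields $m\,w_{\min} < 1$, i.e.\ $m < 1/w_{\min}$, so the true maximal defect is $\ceil{1/w_{\min}} - 1$; I must verify that $\floor{\frac{1-\epsilon}{w_{\min}}}$ equals this value for all small enough $\epsilon$. The troublesome case is when $1/w_{\min}$ is an integer, where a bare floor $\floor{1/w_{\min}}$ would overshoot by one and make the stated bound false; the factor $1-\epsilon$ pulls the quotient just below the integer so that the floor drops by one, which is exactly what the strict inequality $d_{c[v]}(v) < 1$ demands. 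I would also flag that the statement implicitly concerns the graph whose edges all have positive weight: weight-$0$ edges constrain no weighted improper coloring yet still contribute to $\chi(G)$, so they must be excluded for the bound to hold, which is consistent with $w_{\min}$ being the minimum \emph{positive} weight.
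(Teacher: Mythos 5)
Your proof is correct and follows essentially the same route as the paper: both hinge on the observation that a weighted improper coloring is $d$-defective for $d=\floor{\frac{1-\epsilon}{w_{\min}}}$ (after discarding weight-$0$ edges), and then apply the product bound $\chi(G)\leq(d+1)\chi_d(G)$ — you invoke the quoted inequality $\frac{\chi(G)}{d+1}\leq\chi_d(G)$ as a black box, whereas the paper re-derives it on the spot via Brooks' theorem applied to each color class. Your explicit attention to the case where $1/w_{\min}$ is an integer and to the role of weight-$0$ edges is a welcome addition that the paper handles only implicitly.
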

\begin{proof}
  We start by removing all edges $e \in E$ with $w(e) = 0$. Given a weighted
  improper coloring of $G$ with $\chi_w(G)$ colors, each vertex will have at
  most $\floor{\frac{1 - \epsilon}{w_{\min}}}$ neighbors with the same color.
  Hence, by Brooks' theorem, each of the subgraphs induced by the coloring can
  be properly colored with at most $\floor{\frac{1 - \epsilon}{w_{\min}}} + 1$
  colors.  Therefore $\chi(G) \leq \left(\left\lfloor\frac{1 -
  \epsilon}{w_{\min}}\right\rfloor + 1\right)\chi_w(G)$, which yields the
  lemma.
\end{proof}

\begin{theorem}
  \label{prop:dwimp_lovasz}
  If $G = (V,E,w)$ is a weighted digraph, where the underlying graph of $G$ has
  maximum degree $\hat{\Delta}$ and maximum edge weight $w_{max}$, then
  \[
    \chi_w(G) \leq \ceil{\frac{\hat{\Delta}}{\floor{\frac{1 - \epsilon}{w_{\max}}} + 1}} + 1.
  \]
\end{theorem}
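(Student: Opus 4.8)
The plan is to reduce the problem to an (unweighted) defective coloring of the underlying undirected graph $\hat{G}$, and then to apply a Lov\'asz-style local recoloring argument to bound the number of colors needed. The starting observation is that the maximum edge weight controls how many same-colored neighbours a vertex can tolerate: set $D = \floor{\frac{1 - \epsilon}{w_{\max}}}$. If a vertex $v$ has at most $D$ same-colored neighbours in $\hat{G}$, then, since every incoming edge contributing to $d^{-}_{c[v]}(v)$ has weight at most $w_{\max}$ and the set of same-colored in-edges is contained in the set of same-colored neighbours, we get $d^{-}_{c[v]}(v) \le D\,w_{\max} \le \frac{1-\epsilon}{w_{\max}}\,w_{\max} = 1 - \epsilon < 1$. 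Hence any $(k,D)$-defective coloring of $\hat{G}$ is automatically a valid weighted improper $k$-coloring of $G$, so it suffices to exhibit such a defective coloring using $k = \ceil{\frac{\hat{\Delta}}{D + 1}} + 1$ colors.

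Next I would produce the defective coloring by the classical edge-minimization argument underlying Lov\'asz's theorem. Fix $k$ and take, among all $k$-colorings of $\hat{G}$, one that minimizes the number of monochromatic edges. For a vertex $v$ colored $i$, write $d_j(v)$ for its number of neighbours colored $j$ in $\hat{G}$. If recoloring $v$ with some color $j$ strictly decreased the monochromatic edge count we would contradict minimality, so $d_i(v) \le d_j(v)$ for every $j$. Summing the $k$ inequalities gives $k\,d_i(v) \le \sum_j d_j(v) = d(v) \le \hat{\Delta}$, where $d(v)$ is the degree of $v$ in the underlying graph, whence $d_i(v) \le \floor{\hat{\Delta}/k}$. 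Thus the minimizing coloring is $(k, \floor{\hat{\Delta}/k})$-defective.

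It then remains to check the arithmetic. With $k = \ceil{\frac{\hat{\Delta}}{D + 1}} + 1$ we have $k > \frac{\hat{\Delta}}{D + 1}$, so $\hat{\Delta}/k < D + 1$ and therefore $\floor{\hat{\Delta}/k} \le D$. Combining this with the two preceding steps gives a valid weighted improper coloring with the claimed number of colors. The only delicate point is the bookkeeping around the strict inequality $d^{-}_{c[v]}(v) < 1$: one must ensure that $\epsilon$ is chosen so that $D\,w_{\max} < 1$ genuinely holds, i.e.\ that $D$ is exactly the largest number of maximum-weight in-edges whose total stays below $1$, and verify that directedness together with the possible presence of edges of weight less than $w_{\max}$ can only decrease $d^{-}_{c[v]}(v)$ relative to the worst case of $D$ incoming edges each of weight $w_{\max}$. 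Everything else is routine.
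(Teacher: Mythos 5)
Your proof is correct and follows essentially the same route as the paper: both reduce the problem to showing that each vertex may have at most $D=\floor{\frac{1-\epsilon}{w_{\max}}}$ same-colored neighbours in the underlying graph, and both then obtain such a $(k,D)$-defective coloring via Lov\'asz's monochromatic-edge-counting recoloring argument. The only difference is cosmetic: you take a global minimizer of the number of monochromatic edges, whereas the paper iteratively recolors starting from an arbitrary coloring, which is what makes its argument constructive and yields its $O(\hat{\Delta}|E|)$ corollary.
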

\begin{proof}
  We construct a weighted undirected graph $G' = (V', E'=\{\{u,v\}:(u,v)\in
  E\text{ or }(v,u) \in E\},w')$ where $w'(e) = w_{\max}$ for every $e \in E'$.
  Clearly $\chi_w(G) \leq \chi_{w'}(G')$ as we are adding edges and increasing
  the weights. Notice that each $v\in V'$ can have at most $\floor{\frac{1 -
  \epsilon}{w_{\max}}}$ same-colored neighbors in a valid weighted improper coloring. Let $k =
  \ceil{\hat{\Delta}/\floor{\frac{1 - \epsilon}{w_{\max}}+ 1}} + 1$ and start
  with any $k$-coloring of $G'$.

  Say there exists a vertex $v$ with more than $\floor{\frac{1 -
  \epsilon}{w_{\max}}}$ neighbors of the same color. First assume that $v$ has
  more than $\floor{\frac{1 - \epsilon}{w_{\max}}}$ neighbors of every color
  class. But then
  \[
    d_{G'}(v) \geq \left(\floor{\frac{1 - \epsilon}{w_{\max}}}+1\right)\ceil{\frac{\hat{\Delta}}{\floor{\frac{1 - \epsilon}{w_{\max}}} + 1}} \geq \hat{\Delta} + \frac{\hat{\Delta}}{\floor{\frac{1 - \epsilon}{w_{\max}}} + 1},
  \]
  which contradicts the fact that $\hat{\Delta}$ is the maximum degree in the
  underlying graph of $G$. Hence there exists a color class where $v$ has at
  most $\floor{\frac{1 - \epsilon}{w_{\max}}}$ neighbors, and we will change
  the color of $v$ to the color of any such color class.

  We can repeat the previous procedure while there exists a vertex with more
  than $\floor{\frac{1 - \epsilon}{w_{\max}}}$ neighbors of the same color. Now consider
  the monochromatic edges in the graph, which are at most $|E|$ to begin with.
  Every time we change the color of a vertex during this process, the number of
  monochromatic edges decreases by at least one. Hence this process must
  terminate, and at that time each vertex has at most $\floor{\frac{1 -
  \epsilon}{w_{\max}}}$ same-colored neighbors, so we have a valid
  $k$-coloring.
\end{proof}

As the previous proof is constructive, we get the following corollary:

\begin{corollary}
    For a weighted digraph $G=(V,E,w)$ with maximum weight $w_{\max}$ and maximum
    degree $\hat{\Delta}$ in the underlying graph, a weighted improper
    $\left(\ceil{\hat{\Delta}/\floor{\frac{1 - \epsilon}{w_{\max}}+ 1}} + 1\right)$-coloring
    can be found in $O\left(\hat{\Delta}|E|\right)$ time.
\end{corollary}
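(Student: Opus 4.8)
The plan is to observe that the proof of Theorem~\ref{prop:dwimp_lovasz} already describes an algorithm, and to show that this algorithm can be implemented within the claimed bound. Concretely, the procedure builds the undirected auxiliary graph $G'$, fixes an arbitrary initial $k$-coloring with $k = \ceil{\hat{\Delta}/\floor{\frac{1 - \epsilon}{w_{\max}}+ 1}} + 1$, and then repeatedly recolors any vertex that has more than $\floor{\frac{1 - \epsilon}{w_{\max}}}$ neighbors of its own color, moving it to a color class in which it has at most that many neighbors. The theorem guarantees that such a class always exists, so the only thing left to establish is the time complexity.

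First I would bound the number of recoloring steps. As argued in the theorem, each recoloring strictly decreases the number of monochromatic edges of the current coloring; this quantity starts at most $|E'| \le |E|$ and never increases, so the loop performs at most $|E|$ recolorings. Next I would bound the cost of a single step at $O(\hat{\Delta})$. When processing a vertex $v$, we tally the colors of its at most $\hat{\Delta}$ neighbors into an array indexed by color; scanning this array locates a color whose count is at most $\floor{\frac{1 - \epsilon}{w_{\max}}}$, and, resetting only the touched entries afterwards, this costs $O(\hat{\Delta})$.

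To avoid rescanning the whole graph for a bad vertex each iteration, I would maintain for every vertex a counter of its same-colored neighbors together with a worklist of the currently-bad vertices. Recoloring $v$ from color $a$ to color $b$ changes the same-colored-neighbor count only for $v$ and its neighbors: each neighbor of color $a$ has its count decremented and each of color $b$ incremented, and any vertex that thereby becomes bad is pushed onto the worklist; $v$ itself becomes good by the choice of $b$. This touches $O(\hat{\Delta})$ vertices, and when a vertex is popped we simply skip it in $O(1)$ time if its counter shows it is no longer bad. Constructing $G'$, the initial coloring, and the initial counters and worklist all cost $O(|V|+|E|)$ in one pass over the edges.

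Finally I would combine these estimates: the main loop performs $O(|E|)$ genuine recolorings at $O(\hat{\Delta})$ each, and the total number of worklist insertions, hence of skipped pops, is $O(|V| + \hat{\Delta}|E|)$, so the whole procedure runs in $O(\hat{\Delta}|E|)$ time. The main thing to be careful about is precisely the bookkeeping that keeps each iteration at $O(\hat{\Delta})$: a naive search for a bad vertex on every iteration would cost $\Theta(|E|)$ per step and inflate the bound to $O(|E|^2)$, so the incremental maintenance of the counters and the lazily-cleaned worklist is the crux of the argument rather than any deeper combinatorial fact, which is already supplied by Theorem~\ref{prop:dwimp_lovasz}.
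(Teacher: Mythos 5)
Your proposal is correct and takes essentially the same route as the paper: the paper derives this corollary solely from the observation that the proof of Theorem~\ref{prop:dwimp_lovasz} is constructive, and your recoloring procedure is exactly that construction. The counter-and-worklist bookkeeping you add to keep each iteration at $O(\hat{\Delta})$ is a detail the paper leaves implicit, but it is precisely what is needed to realize the stated $O\left(\hat{\Delta}|E|\right)$ bound, so it is a faithful elaboration rather than a different argument.
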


Halld\'orsson and Bang-Jensen \cite{hallbang} proved the essentially tight
bound of $\chi_w(G) \leq \lfloor2\Delta^{-1}+1\rfloor$ in terms of maximum
weighted indegree. This bound is used in the following theorem to give a bound
in terms of sum of edge weights.

\begin{theorem}
    If $G = (V,E,w)$ is a weighted digraph and $W = \sum_{e \in E}w(e)$, then
    $\chi_w(G) \leq 2\floor{\sqrt{2W}} + 1$.
\end{theorem}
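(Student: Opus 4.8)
The plan is to prove the bound by induction on the number of vertices, using the Halld\'orsson--Bang-Jensen bound $\chi_w(G) \le \floor{2\Delta^{-}+1}$ as a black box and a trivial ``fresh colour'' move to absorb vertices of large weighted indegree. Write $k = \floor{\sqrt{2W}}$, so the goal is $\chi_w(G) \le 2k+1$. The only arithmetic fact about the floor I will lean on is that $k = \floor{\sqrt{2W}}$ forces $2W < (k+1)^2$, that is $W < (k+1)^2/2$.

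First I would split on the maximum weighted indegree $\Delta^{-}$. If $\Delta^{-} < k + \tfrac12$, then $2\Delta^{-}+1 < 2k+2$, so the Halld\'orsson--Bang-Jensen bound already gives $\chi_w(G) \le \floor{2\Delta^{-}+1} \le 2k+1$ and we are done. Otherwise $\Delta^{-} \ge k + \tfrac12$; let $v$ attain the maximum weighted indegree and delete it. Deleting $v$ removes at least its incoming weight, so the total weight of $G - v$ satisfies $W' \le W - \Delta^{-} \le W - k - \tfrac12 < (k+1)^2/2 - k - \tfrac12 = k^2/2$. Hence $2W' < k^2$, so $\floor{\sqrt{2W'}} \le k-1$, and the induction hypothesis applied to $G - v$ yields $\chi_w(G-v) \le 2(k-1)+1 = 2k-1$.

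To finish I would re-insert $v$ with a brand-new colour used nowhere else. This move is always valid in the directed weighted setting: $v$ then has no same-coloured in-neighbour, so $d^{-}_{c[v]}(v) = 0 < 1$, and no previously coloured vertex acquires a new same-coloured in-neighbour, so every earlier constraint is preserved. Thus $\chi_w(G) \le \chi_w(G-v) + 1 \le 2k \le 2k+1$, closing the induction (the base case $W = 0$ is immediate, with $\chi_w(G) = 1 = 2\cdot 0 + 1$).

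The main obstacle, and the reason the argument is shaped this way, is the directedness of the weights: one cannot take an optimal colouring of $G - v$ and greedily recolour $v$ with an \emph{existing} colour, because $v$'s out-edges could push an already-coloured out-neighbour's same-coloured weighted indegree up to $1$. I therefore avoid any colour-reuse extension entirely and rely on the only manifestly safe re-insertion, pushing all the genuine indegree work onto the Halld\'orsson--Bang-Jensen bound. The one delicate point is the floor bookkeeping: the naive threshold $\Delta^{-} \le k$ loses a factor of $\tfrac12$ and leaves the estimate off by one, so it is essential to branch at $k + \tfrac12$, which is exactly what drives $W'$ below $k^2/2$ and improves the inductive bound by the required $2$.
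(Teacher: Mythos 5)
Your proof is correct, but it takes a genuinely different route from the paper's. The paper argues in one shot: it calls a vertex heavy if its weighted indegree is at least $\sqrt{W/2}$, observes by averaging (the weighted indegrees sum to exactly $W$) that there are at most $\sqrt{2W}$ heavy vertices, gives each of them a private color, and applies the Halld\'orsson--Bang-Jensen bound to the induced subgraph on the remaining vertices, whose maximum weighted indegree is below $\sqrt{W/2}$; this costs $\floor{\sqrt{2W}} + \floor{\sqrt{2W}+1} = 2\floor{\sqrt{2W}}+1$ colors. You instead induct on the number of vertices, peeling off a single heaviest vertex whenever $\Delta^{-}\ge k+\tfrac12$ and exploiting the fact that its removal deletes at least $\Delta^{-}$ of total weight to force $\floor{\sqrt{2W'}}$ down to $k-1$, then re-inserting it with a fresh color. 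Both arguments rest on the same two ingredients --- the Halld\'orsson--Bang-Jensen bound for the low-indegree part and manifestly safe private colors for the high-indegree vertices --- but your averaging step is replaced by floor bookkeeping around the threshold $k+\tfrac12$ (which is roughly twice the paper's threshold $\sqrt{W/2}=\sqrt{2W}/2$), and each approach ends up spending at most $\floor{\sqrt{2W}}$ private colors. The paper's version is shorter; yours makes the reason the bound is $2k+1$ rather than $2k+2$ more transparent. Two minor points worth a word in a polished write-up: the base case of an induction on $|V|$ should be stated in terms of the vertex count (though your Case~1 with $\Delta^{-}=0$ already covers it), and when $k=0$ your Case~2 is vacuous, since $\Delta^{-}\le W<\tfrac12$ there.
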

\begin{proof}
  Let $t$ denote the number of vertices $v$ of $G$ with $d^{-}(v) \geq
  \sqrt{W/2}$. Since the sum of indegrees of those vertices is at least $t
  \cdot \sqrt{W/2} \leq W$, we get that $t \leq \sqrt{2W}$. Let $G'$ be
  the induced subgraph of the vertices with indegree at least $\sqrt{W/2}$. By
  \cite{hallbang} the rest of the graph $G \setminus G'$ has chromatic number
  $\chi_w(G \setminus G') \leq \floor{2\sqrt{W/2} + 1} =
  \floor{\sqrt{2W}+1}$. By coloring $G'$ with $\floor{\sqrt{2W}}$
  colors, we get a coloring using at most $\floor{\sqrt{2W}} +
  \floor{\sqrt{2W} + 1}$ colors, and the theorem follows.
\end{proof}

\section{Fixed-Parameter Algorithms}
Both ordinary graph coloring and defective coloring are fixed-parameter
tractable when parameterized by treewidth \cite{kolmanfair,rao2007msol}. Thus it
is natural to ask whether the same holds true for weighted improper coloring.
Theorem~\ref{thm:pathwidthhardness} shows that this is not the case.

\begin{figure}[h]
  \centering
  \begin{subfigure}[b]{0.3\textwidth}
    \begin{tikzpicture}[auto]
      \centering
      \node[wvertex,font=\footnotesize] (v) at (0,0) {$v$};
      \node[wvertex,font=\footnotesize] (u) at ($(a) + (1,1)$) {$u$};
      \node[wvertex,font=\footnotesize] (w) at ($(a) + (1,-1)$) {$w$};
      \node (d) at ($(a) + (-0.7,-0.7)$) {};
      \node (e) at ($(a) + (-0.7,0.7)$) {};
      \node (f) at ($(a) + (-1,0)$) {$\dots$};

      \path[edge,->] (u) --node[yshift=5pt,font=\scriptsize] {$0.9$} (v);
      \path[edge,->] (w) --node[swap,yshift=-5pt,font=\scriptsize] {$0.9$} (v);
      \path[edge]
        (d) -- (v)
        (e) -- (v)
        (f) -- (v);
    \end{tikzpicture}
  \end{subfigure}
  \quad
  \begin{subfigure}[h]{0.3\textwidth}
    \centering
    \begin{tikzpicture}[auto]
      \node[evertex,font=\footnotesize] (a) at (0,0) {$v,u$};
      \node[evertex,font=\footnotesize] (b) at ($(a) + (2,0)$) {$v$};
      \node[evertex,font=\footnotesize] (c) at ($(b) + (2,0)$) {$v,w$};
      \node (0) at ($(a) + (1,0)$) {$\dots$};
      \node (1) at ($(b) + (1,0)$) {$\dots$};
      \path[edge] 
        (a) -- (0)
        (0) -- (b)
        (b) -- (1)
        (1) -- (c);
    \end{tikzpicture}
    \vspace{30pt}
  \end{subfigure}
  \vspace{-28pt}
  \caption{A weighted digraph and its tree decomposition.}
  \label{fig:coloringfptexample}
\end{figure}
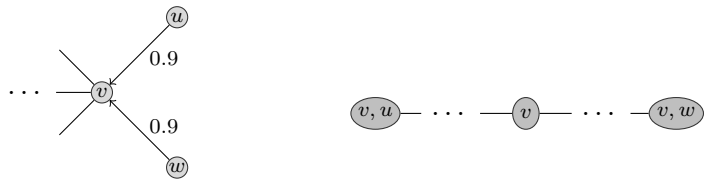

Figure~\ref{fig:coloringfptexample} shows a small part of a graph (on the left)
and what the relation between these vertices might look like in a tree
decomposition of the same graph (on the right). The classic fixed-parameter
algorithm for ordinary coloring would at some point visit the vertex
corresponding to the subset $\{v,u\}$ and choose colors for $v$ and $u$.  Then
it would proceed down the tree decomposition, possibly visiting multiple
vertices, and then finally arrive at the vertex corresponding to the subset
$\{v,w\}$. At that point $w$ needs to be assigned some color, and it is
immediately clear that it cannot be the same color as is assigned to $v$. On
the other hand, if this was weighted improper coloring, then it may or may not
be valid to assign $v$'s assigned color to $w$. This depends on which neighbors
of $v$ have the same color as $v$ does. In this example it depends on the color
of $u$, and possibly other vertices. This shows that valid colors for a vertex
may depend on multiple decisions that are not local to its closest ancestors in
the tree decomposition, possibly giving some intuition into why weighted
improper coloring is not fixed-parameter tractable when parameterized only by
treewidth.

However, we now show that weighted improper coloring is fixed-parameter
tractable when parameterized by treewidth and either the maximum indegree in a
graph or the precision of edge weights. As a first step towards this, we
present the following fact.

\begin{fact}
    \label{lem:treedecompchromatic}
    \textnormal{\cite[Theorem 6]{chlebikova2002partial}} The chromatic number of a graph $G$ of treewidth $k$ is at most $k+1$.
\end{fact}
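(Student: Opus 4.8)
The plan is to show that a graph of treewidth $k$ admits a proper $(k+1)$-coloring, which I would prove by induction on $|V|$ after establishing the key structural fact that every such graph contains a vertex of degree at most $k$. Together these reduce the chromatic bound to a greedy coloring argument, exploiting that graphs of treewidth $k$ are $k$-degenerate.

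First I would establish the degree fact. Fix a tree decomposition $(X,T)$ of $G$ of width $k$, so every bag has at most $k+1$ vertices. Root $T$ arbitrarily and consider a leaf bag $X_\ell$ with parent $X_p$. If $X_\ell \subseteq X_p$, then $X_\ell$ contributes nothing new and can be deleted to yield a tree decomposition of the same width on fewer bags, so I may assume there is a vertex $v \in X_\ell \setminus X_p$. By the subtree-connectivity property (Property~\ref{prop:treedecompsubtree}), the bags containing $v$ form a connected subtree of $T$; since $v \notin X_p$ and $X_\ell$ is a leaf, $v$ can lie in no bag other than $X_\ell$. By the edge-covering property (Property~\ref{prop:treedecompedge}), every neighbor of $v$ must share a bag with $v$, hence lies in $X_\ell \setminus \{v\}$; therefore $\deg(v) \le k$.

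Next I would run the induction. If $|V| \le k+1$, I assign each vertex its own color. Otherwise I pick the low-degree vertex $v$ above. Restricting every bag to $V \setminus \{v\}$ gives a tree decomposition of $G - v$ of width at most $k$, so $G - v$ has treewidth at most $k$ and, by the induction hypothesis, admits a proper $(k+1)$-coloring. Since $v$ has at most $k$ neighbors, those neighbors use at most $k$ of the $k+1$ available colors, leaving at least one free color to assign to $v$. This extends the coloring to all of $G$ and closes the induction.

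The main obstacle is the structural step: one must argue carefully, via Property~\ref{prop:treedecompsubtree}, that a vertex appearing in a leaf bag but not in its parent appears in no other bag, and then invoke Property~\ref{prop:treedecompedge} to confine all of its neighbors to that single bag. The redundant-leaf case $X_\ell \subseteq X_p$ must also be handled so that repeatedly trimming such leaves terminates and genuinely produces a private vertex of low degree.
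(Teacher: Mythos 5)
Your proof is correct. Note, however, that the paper does not prove this statement at all: it is imported verbatim as a Fact with a citation to Chleb\'{\i}kov\'a (Theorem 6 of that reference), so there is no in-paper argument to compare against. What you supply is the standard self-contained proof via $k$-degeneracy: trim redundant leaf bags, find a leaf bag with a private vertex $v$, use the subtree-connectivity property to confine $v$ to that single bag and the edge-covering property to confine all of $v$'s neighbors there too, conclude $\deg(v)\le k$, and close with induction and a greedy color choice. All the delicate points are handled -- the termination of leaf-trimming, the fact that deleting a leaf bag contained in its parent preserves all three decomposition properties, the single-bag case being absorbed by the base case $|V|\le k+1$, and the width of the restricted decomposition of $G-v$ being at most (not exactly) $k$. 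This is exactly the kind of elementary argument the cited theorem rests on, so your proposal is a legitimate replacement for the citation rather than a divergence from the paper's (nonexistent) proof.
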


As ordinary coloring is a stricter requirement than weighted improper coloring,
it follows that the weighted improper chromatic number of a graph $G$ of
treewidth $k$ is also at most $k+1$.

The following algorithms operate on an optimal tree decomposition, which can be
constructed in linear time (when the treewidth is bounded) by Bodlaender \cite{bodlaender1993linear}.

\subsection{Bounded Treewidth and Bounded Maximum Indegree}
If we additionally restrict the maximum unweighted indegree of the graph, then
it is possible to keep track of the colors of all vertices and all their
in-neighbors within a given super-vertex in the tree decomposition, which is
exactly the information needed to check if the coloring restrictions for a
given vertex within that super-vertex are fulfilled.

Now consider any weighted digraph $G=(V,E,w)$ with bounded treewidth $k$
and bounded maximum indegree.
Algorithm~\ref{alg:boundeddegreefpa} is an adaptation of the classic
fixed-parameter algorithm for ordinary graph coloring. It additionally keeps
track of colors of the in-neighbors of the vertices being colored to be able to
check weight restrictions.

A call to \textsc{Color}($X_i$, $\bar{c}$) returns the minimum $l$ such that
the subgraph induced by the vertices contained in the super-vertices in the
subtree rooted at $X_i$ is $l$-colorable with respect to the partial coloring
$\bar{c}$. To answer the query \textsc{Color} tries all possible colorings $c$ of the
vertices and their neighbors using the same colors for those
vertices that have already been colored with the coloring $\bar{c}$. Then it
recursively calls itself to answer further subproblems.

\begin{algorithm}
    \begin{algorithmic}[1]
      \MyFunction{Color}{$X_i$, $\bar{c}$}
        \Begin
            \State Let $X_p$ be the parent of $X_i$ in $T$, or the empty set if $X_i$ is the root
            \State \tbox{Let $V_i$ be the set of vertices in $X_i$ and all their in-neighbors, and let $V_p$ be defined similarly for $X_p$}
            \State $\chi \gets +\infty$
            \For {each coloring $c:V_i \rightarrow \{1,\ldots,k+1\}$ with $c(v) = \bar{c}(v)$ for $v \in V_i \cap V_p$} \label{line:forconsistentcolor}
                \If {$d^{-}_{c[v]}(v) < 1$ for each $v\in X_i$} \label{line:validcoloring}
                    \State $\chi' \gets$ highest index of a color used by $c$
                    \For {each child $X_j$ of $X_i$}
                        \State $\chi' \gets \max(\chi',\, \textsc{Color}(X_j,\, c))$ \label{line:recursivecall}
                    \EndFor
                    \State $\chi \gets \min(\chi,\,\chi')$
                \EndIf
            \EndFor
            \State \Return $\chi$
        \End
        \Statex
        \State Construct a tree decomposition $(X,T)$ of $G$ of width at most $k$
        \State Root the tree $T$ at super-vertex $X_1$
        \State \Return \textsc{Color}($X_1,\, \varnothing$)
    \end{algorithmic}
    \caption{A fixed-parameter algorithm for graphs of bounded treewidth and bounded maximum indegree.}
    \label{alg:boundeddegreefpa}
\end{algorithm}

In order to argue the correctness of this algorithm, there are essentially
three properties that need to be shown:
\begin{itemize}
    \item the assignment of colors to vertices is consistent, that is, exactly
        one color is assigned to a given vertex, and this color is used
        throughout the duration of the algorithm,
    \item any coloring that the algorithm discovers is valid, and
    \item any valid coloring can be discovered by the algorithm.
\end{itemize}

We show that these properties hold in Section~\ref{sec:algboundeddegreecorrect}
of Appendix~\ref{app:correctness}. This gives us the following theorem.

\begin{theorem}
    \label{the:fptdegree}
    Weighted improper coloring is fixed-parameter tractable when parameterized
    by treewidth and maximum indegree.
\end{theorem}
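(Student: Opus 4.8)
The plan is to prove Theorem~\ref{the:fptdegree} by showing that Algorithm~\ref{alg:boundeddegreefpa} correctly computes $\chi_w(G)$ and runs in time $f(k,\Delta)\cdot\mathrm{poly}(n)$, where $k$ is the treewidth and $\Delta$ is the (unweighted) maximum indegree; the ``fixed-parameter'' content lies in the running-time bound, while correctness reduces to the three isolated properties. First, for consistency I would prove the structural fact that for every vertex $v$ the set of super-vertices $X_i$ with $v\in V_i$ forms a connected subtree of $T$. This holds because that set equals $T_v\cup\bigcup_u T_u$, where $T_v$ is the (connected, by Property~\ref{prop:treedecompsubtree}) subtree of bags containing $v$, the union ranges over the out-neighbors $u$ of $v$, and each $T_u$ meets $T_v$ since the edge forces $u$ and $v$ into a common bag (Property~\ref{prop:treedecompedge}); a union of subtrees each meeting a common connected subtree is connected. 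Consequently the constraint $c(v)=\bar c(v)$ for $v\in V_i\cap V_p$ on line~\ref{line:forconsistentcolor} propagates the color of $v$ along the contiguous segment of bags that color it, so $v$ receives exactly one color per recursion branch.

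Next I would dispatch soundness and completeness. For soundness, every $v$ lies in some bag $X_i$ (Property~\ref{prop:treedecompvertex}), and at that bag all in-neighbors of $v$ lie in $V_i$ and are therefore colored, so the test $d^{-}_{c[v]}(v)<1$ on line~\ref{line:validcoloring} checks the global constraint exactly; hence any coloring the algorithm accepts is a valid weighted improper coloring. For completeness, the remark following Fact~\ref{lem:treedecompchromatic} gives $\chi_w(G)\le k+1$, so restricting the palette to $\{1,\ldots,k+1\}$ loses nothing. Then for an optimal valid coloring $c^{\ast}$, an induction down the tree shows that at each super-vertex the restriction $c^{\ast}|_{V_i}$ is among the colorings enumerated on line~\ref{line:forconsistentcolor} and passes the validity test, so $c^{\ast}$ is discovered and its color count enters the minimization; combined with soundness this forces the returned value to equal $\chi_w(G)$.

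The step I expect to be the main obstacle is the running-time analysis. Each bag satisfies $|X_i|\le k+1$ and each of its vertices has at most $\Delta$ in-neighbors, so $|V_i|\le (k+1)(1+\Delta)$ and the number of colorings enumerated in a single call is at most $(k+1)^{(k+1)(1+\Delta)}$, a function of the parameters alone. The difficulty is that the recursion as written revisits identical subproblems and branches multiplicatively over colorings and children, which taken naively is exponential in $n$ rather than FPT. The resolution is to observe that \textsc{Color}$(X_i,\bar c)$ depends on $\bar c$ only through its restriction to $V_i\cap V_p$, so there are at most $(k+1)^{|V_i\cap V_p|}\le (k+1)^{(k+1)(1+\Delta)}$ distinct subproblems per super-vertex. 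Memoizing on the pair $(X_i,\ \bar c|_{V_i\cap V_p})$ and evaluating bottom-up over the $O(n)$ super-vertices of the optimal tree decomposition (constructible in linear time by Bodlaender~\cite{bodlaender1993linear}), each subproblem costs $(k+1)^{(k+1)(1+\Delta)}\cdot\mathrm{poly}(n)$, for a total of $O(n)\cdot(k+1)^{O((k+1)(1+\Delta))}\cdot\mathrm{poly}(n)$. This is of the required FPT form, so establishing the boundedness of the number of distinct subproblems — and checking that memoization preserves the correctness argued above — is the crux that completes the proof.
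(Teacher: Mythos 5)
Your proposal is correct and follows essentially the same route as the paper: the same three-part correctness argument (consistency via the connected-subtree property of the sets $V_i$, soundness via the local check at a bag containing $v$, completeness via the $k+1$ palette bound), followed by memoization of \textsc{Color} on its parameters to obtain the FPT running time. The only difference is cosmetic — you prove the connected-subtree fact as a union of pairwise-intersecting subtrees rather than by the paper's case analysis in Lemma~\ref{lem:supervertexbagssubtree}, and your count $(k+1)(1+\Delta)$ for $|V_i|$ is in fact slightly more careful than the paper's.
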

\begin{proof}
    Lemmas \ref{lem:discoveredvalid} and \ref{lem:validdiscovered} show that
    the colorings discovered by Algorithm~\ref{alg:boundeddegreefpa} are
    exactly the valid colorings using at most $k+1$ colors. In particular, as
    the weighted improper chromatic number is at most $k+1$ by
    Lemma~\ref{lem:treedecompchromatic}, it will discover all colorings that
    have the minimum number of colors. And as the algorithm returns the minimum
    number of colors in any coloring found, it will return the weighted
    improper chromatic number of the input graph $G$.

    Algorithm~\ref{alg:boundeddegreefpa} has exponential time complexity, as it
    is essentially a backtracking algorithm that explores all valid colorings
    of at most $k+1$ colors. Now notice that the results of the \textsc{Color}
    function only depend on the values of its input parameters. Also notice
    that the call hierarchy of the recursive calls is acyclic. Therefore the
    results of these function calls can be computed efficiently using dynamic
    programming. That is, by computing the results of the function calls for
    all possible input parameters, storing all results, and then re-using the
    stored results when needed. Let $n=|V(G)|$, $k$ be the treewidth and
    $\hat{\Delta}^{-}$ denote the maximum unweighted indegree of $G$. As there
    are $(k+1)^{(k+1)\hat{\Delta}^{-}}$ possible maps from a set of
    $(k+1)\hat{\Delta}^{-}$ elements to a set of $k+1$ elements, there are
    $O\left(n(k+1)^{(k+1)\hat{\Delta}^{-}}\right)$ possible input parameters to
    the \textsc{Color} function. As the result for each of them can be computed
    in $O\left(n (k+1)^{(k+1)\hat{\Delta}^{-}}\right)$ time, this gives us
    total time complexity $O\left(n^2 (k+1)^{2(k+1)\hat{\Delta}^{-}} \right)$.
    Although unnecessary for this result, noticing that each super-vertex is actually
    only referenced twice reduces the time complexity down to $O\left(n
    (k+1)^{2(k+1)\hat{\Delta}^{-}} \right)$. This time complexity along with
    the correctness of the algorithm proves that weighted improper coloring is
    fixed-parameter tractable when parameterized by treewidth and maximum
    indegree.
\end{proof}

The algorithm has memory complexity $O\left(n(k+1)^{(k+1)\hat{\Delta}^{-}}\right)$.
It may also be of interest that, using common dynamic programming techniques,
it is possible to construct an optimal coloring, or count the number of optimal
colorings.

There are many graphs of bounded treewidth and bounded maximum degree, and this
algorithm is linear when applied to these graphs. This includes bounded degree
interval graphs, $C_n\times K_t$ where $t$ is bounded, and $k$th powers of
paths and cycles where $k$ is bounded. Remember that it is
$\mathcal{NP}$-complete to weighted improper color the unbounded versions of
these graph classes by Lemma~\ref{lem:cliquesnpcomplete}.

\subsection{Bounded Treewidth and Bounded Precision Weights}
Another possibility is to additionally restrict the precision of weights in the
graph. Consider the process of assigning colors to vertices one at a time.
Initially, when no vertex has been colored, all vertices have weighted indegree
zero from same-colored vertices, and are allowed to have an additional weighted
indegree of $1-\epsilon$ of same-colored vertices. We will call this the budget
of the vertex. As more vertices are assigned colors, the budget of a given
vertex gradually decreases. If the weights have bounded precision, then it is
possible to keep track of the budgets in such a process.

Now consider any weighted digraph $G=(V,E,w)$ with bounded treewidth $k$ and
edge weights with bounded precision of $b$ bits. Let $R$ denote the set of
$b$-bit fixed precision real numbers. Algorithm~\ref{alg:boundedweightsfpa},
similar to Algorithm~\ref{alg:boundeddegreefpa}, is an adaptation of the
classic fixed-parameter algorithm for ordinary graph coloring, but instead of
the coloring of the neighbors, it keeps track of vertex budgets to check weight
restrictions.

A call to \textsc{Color}($X_i$, $\bar{c}$, $\bar{r}$) returns the minimum $l$
such that the subgraph induced by the vertices contained in the super-vertices
in the subtree rooted at $X_i$ is $l$-colorable, provided that
\begin{itemize}
\item the partial coloring given by the parameter $\bar{c}$ is the
coloring done at the parent of $X_i$ (which could share some vertices with
$X_i$), and
\item the second parameter $\bar{r}$ gives the remaining budgets of
the vertices in $X_i$'s parent.
\end{itemize}

To answer this query \textsc{Color} tries all possible colorings $c$ of
the vertices in $X_i$ while maintaining the colors given by
$\bar{c}$.  However, it cannot directly call itself recursively to solve
further subproblems. The reason is that the budget of a given vertex is not independent amongst the
children super-vertices. 
In order to break up this dependency, the \textsc{Distribute} function
distributes the budget of a given vertex amongst the children super-vertices.
This is done by recursively going over all child super-vertices of super-vertex
$X_p$ and deciding how much of the budget may be spent in the subtree
rooted at the $i$-th child.
Then \textsc{Color} can be called with this budget, $r$, on the $i$-th
child to solve further subproblems.
%
%

\begin{algorithm}
  \begin{algorithmic}[1]
    \MyFunction{Distribute}{$X_p$, $\bar{c}$, $\bar{r}$, $i$}
      \Begin\,
        \If {$X_p$ has less than $i$ children}
            \State \Return $0$
        \Else
          \State Let $X_i$ be the $i$-th child of $X_p$
          \State $\chi \gets +\infty$
          \For {each possible $r:X_p \rightarrow R$ where $0\leq r(v) \leq \bar{r}(v)$}
            \State $\chi' \gets \max(\textsc{Distribute}(X_p,\,\bar{c},\,(\bar{r} - r),\,i+1),\,\textsc{Color}(X_i,\,\bar{c},\,r))$\label{line:recursivecall2_2}
            \State $\chi \gets \min(\chi, \chi')$
          \EndFor
          \State \Return $\chi$
        \EndIf
      \End
      \Statex
      \MyFunction{Color}{$X_i$, $\bar{c}$, $\bar{r}$}
      \Begin
        \State Let $X_p$ be the parent of $X_i$ in $T$, or the empty set if $X_i$ is the root
        \State $\chi \gets +\infty$
        \For {each coloring $c:V_i \rightarrow \{1,\ldots,k+1\}$ with $c(v) = \bar{c}(v)$ for $v \in V_i \cap V_p$}\label{line:forconsistentcolor2}
          \State Let $r:V_i\rightarrow R$ such that $r(v) = \bar{r}(v)$ if $v\in V_p$ and $r(v) = 1-\epsilon$ otherwise\label{line:assignepsilon}
          \For {each $(u,v) \in E(G)$}
             \If {$\{u,v\} \subseteq X_i$ and ($u\not\in X_p$ or $v\not\in X_p$)}
                \State $r(v) \gets r(v) - w(u,v)$ \label{line:subtractweight}
             \EndIf
          \EndFor
          \If {$r(v) \geq 0$ for each $v\in X_i$} \label{line:checknonnegative} \label{line:validcoloring2}
            \State $\chi \gets \min(\chi, \textsc{Distribute}(X_i,\,c,\,r,\,1))$\label{line:recursivecall2}
          \EndIf
        \EndFor
        \State \Return $\chi$
      \End
      \Statex
      \State Construct a tree decomposition $(X,T)$ of $G$ of width at most $k$
      \State Root the tree $T$ at super-vertex $X_1$
      \State \Return \textsc{Color}($X_1,\,\varnothing,\,\varnothing$)
  \end{algorithmic}
  \caption{A fixed-parameter algorithm for graphs of bounded treewidth and bounded precision weights.}
  \label{alg:boundedweightsfpa}
\end{algorithm}

In order to argue the correctness of this algorithm the same three properties as with Algorithm~\ref{alg:boundeddegreefpa} need to be shown.
We show that these properties hold in Section~\ref{sec:algboundedweightcorrect}
of Appendix~\ref{app:correctness}. This gives us the following theorem.

\begin{theorem}
    \label{the:fptweight}
    Weighted improper coloring is fixed-parameter tractable when parameterized
    by treewidth and precision of weights.
\end{theorem}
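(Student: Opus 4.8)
The plan is to follow the template of the proof of Theorem~\ref{the:fptdegree}: establish correctness by appealing to the three structural properties of Algorithm~\ref{alg:boundedweightsfpa} proved in Appendix~\ref{app:correctness}, and then bound its running time by $f(k,b)\cdot n^{O(1)}$, where $k$ is the treewidth and $b$ the precision. For correctness, I would argue that the colorings discovered by the algorithm are exactly the valid weighted improper colorings using at most $k+1$ colors. Since the weighted improper chromatic number of a graph of treewidth $k$ is at most $k+1$ (Fact~\ref{lem:treedecompchromatic} together with the remark following it, as proper coloring is a stricter requirement than weighted improper coloring), the optimal coloring lies inside the search space, and because the top-level call returns the minimum number of colors over all discovered colorings, \textsc{Color}$(X_1,\varnothing,\varnothing)$ returns $\chi_w(G)$.

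For the running time I would memoize \textsc{Color} and \textsc{Distribute} on their arguments; this is justified because each output depends only on the inputs and the call hierarchy is acyclic. Let $R$ be the set of $b$-bit fixed-precision budgets, so $|R| = O(2^b)$, and recall that each bag $X_i$ has at most $k+1$ vertices. Thus a bag admits at most $(k+1)^{k+1}$ colorings and at most $|R|^{k+1} = O(2^{b(k+1)})$ budget assignments, so the number of distinct argument tuples to \textsc{Color} is $O\!\left(n\,(k+1)^{k+1}\,2^{b(k+1)}\right)$, and that for \textsc{Distribute} is larger by only the polynomial factor coming from its child-index argument. Each call performs work bounded by the same product of $(k+1)^{k+1}$ colorings and $O(2^{b(k+1)})$ budget splits, plus an $O(|E|)$ edge loop. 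Multiplying the number of states by the per-call work gives a bound of the form $f(k,b)\cdot n^{O(1)}$, which establishes fixed-parameter tractability in $k$ and $b$.

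The step I expect to be the main obstacle is the completeness half of correctness: showing that \textsc{Distribute} really enumerates the budget split induced by an arbitrary valid coloring. The difficulty is that a vertex $v$ may occur in several children subtrees of $X_i$, and its admissible same-color in-weight must be apportioned among them exactly as the valid coloring dictates. One must check that these partial amounts form a nonnegative partition of $\bar{r}(v)$ and, critically, that each partial amount is itself a $b$-bit fixed-precision number, so that it occurs among the values $0\le r(v)\le\bar{r}(v)$ ranged over in the \textsc{Distribute} loop. This is precisely where the bounded-precision hypothesis is used: since every edge weight lies in $R$, so does every partial sum of same-color in-weights, and hence the required split is representable and is tried. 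Soundness is comparatively routine: the non-negativity test on line~\ref{line:checknonnegative} guarantees that no vertex ever exceeds its budget, while the connectivity property of tree decompositions (Property~\ref{prop:treedecompsubtree}) ensures that each edge is charged against a budget in exactly one bag, so that no in-weight is double-counted.
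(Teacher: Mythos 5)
Your proposal matches the paper's proof essentially step for step: correctness via the appendix lemmas showing the algorithm discovers exactly the valid colorings with at most $k+1$ colors, combined with Fact~\ref{lem:treedecompchromatic}, and fixed-parameter tractability via memoizing \textsc{Color} and \textsc{Distribute} over state spaces of size $O\!\left(n\,(k+1)^{k+1}2^{b(k+1)}\right)$ (times a polynomial factor for the child index). The one place you go beyond the paper is in flagging that completeness requires the budget split induced by an arbitrary valid coloring to consist of $b$-bit representable amounts so that \textsc{Distribute} actually enumerates it --- a point the paper's Lemma~\ref{lem:validdiscovered2} passes over silently --- so your extra care there is a refinement rather than a deviation.
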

\begin{proof}
    Lemmas~\ref{lem:discoveredvalid2} and \ref{lem:validdiscovered2} show that
    the colorings discovered by Algorithm~\ref{alg:boundedweightsfpa} are
    exactly the valid colorings using at most $k+1$ colors. In particular, as
    the weighted improper chromatic number is at most $k+1$ by
    Lemma~\ref{lem:treedecompchromatic}, it will discover all colorings that
    have the minimum number of colors. And as the algorithm returns the minimum
    number of colors in any coloring found, it will return the weighted
    improper chromatic number of the input graph $G$.

    Algorithm~\ref{alg:boundedweightsfpa} has exponential time complexity, as
    it is essentially a backtracking algorithm that explores all valid
    colorings of at most $k+1$ colors. Now notice that the results of the
    \textsc{Distribute} and \textsc{Color} functions only depend on the values
    of their input parameters. Also notice that the call hierarchy of the
    recursive calls is acyclic. Therefore the results of these function calls
    can be computed efficiently using dynamic programming. That is, by
    computing the results of the function calls for all possible input
    parameters, storing all results, and then re-using the stored results when
    needed. Let $n=|V(G)|$, $k$ be the treewidth, and $b$ be the precision of
    weights (i.e.\ the number of bits used to represent them). As there are
    $(k+1)^{k+1}$ possible maps from a set of $k+1$ elements to a set of $k+1$
    elements, there are $O\left(n^2(k+1)^{k+1}2^{b(k+1)}\right)$ possible input
    parameters to the \textsc{Distribute} function. The result for each of them
    can be computed in $O\left(2^{b(k+1)}\right)$ time. Similarly we can see
    that there are $O\left(n(k+1)^{k+1}2^{b(k+1)}\right)$ possible input
    parameters to the \textsc{Color} function and the result for each of them
    can be computed in $O\left(n^2(k+1)^{k+1}\right)$ time. This gives us total
    time complexity $O\left(n^2 (k+1)^{k+1} 2^{2b(k+1)} + n^3 (k+1)^{2(k+1)}
    2^{b(k+1)} \right)$. This time complexity along with the correctness of the
    algorithm proves that weighted improper coloring is fixed-parameter
    tractable when parameterized by treewidth and precision of weights.
\end{proof}

\section{Conclusion}
In this paper we defined the weighted improper coloring problem and presented
some hardness results for it. In particular we showed that weighted improper
coloring is not fixed-parameter tractable when pathwidth is fixed. We
generalized some bounds for defective coloring to weighted improper coloring,
and used a bound by Halldórsson and Bang-Jensen \cite{hallbang} to derive a
bound for weighted improper coloring in terms of the sum of edge weights. As
this bound is not tight, it would be interesting to find a tight bound in terms
of the sum of edge weights.

We also showed that $3$-regular graphs that have edge weights strictly less than $1$
are $2$-colorable, and that two colors may not be sufficient when each vertex
has at most two incident weight-$1$ edges.  It would be interesting to
find necessary and sufficient conditions on edge weights for a $3$-regular
graph to be $2$-colorable. In particular we conjecture that sub-cubic graphs
with at most one incident weight-$1$ edge are $2$-colorable.

We gave fixed-parameter algorithms for weighted
improper coloring when either treewidth and maximum degree are fixed or when
treewidth and precision of edge weights are fixed. These algorithms also imply
a linear-time algorithm for certain graph classes such as interval graphs with
bounded degree.
A combination of rounding edge weights with the fixed-parameter algorithm for
bounded treewidth and bounded precision weights might imply an approximation algorithm
for weighted improper coloring of graphs of bounded treewidth.

This paper presents the results of our final project in our Bachelor studies in
Computer Science and we want to thank our instructor, Magnús Már Halldórsson,
for his guidance, advice and helpful hints. We would also like to thank Henning
Úlfarsson for reviewing.

\bibliographystyle{entcs}
\bibliography{references}

\appendix

\section{Correctness of Algorithms}
\label{app:correctness}

\subsection{Correctness of Algorithm~\ref{alg:boundeddegreefpa}}
\label{sec:algboundeddegreecorrect}

In order to argue the correctness of Algorithm~\ref{alg:boundeddegreefpa},
there are essentially three properties that need to be shown:
\begin{itemize}
    \item the assignment of colors to vertices is consistent, that is, exactly
        one color is assigned to a given vertex, and this color is used
        throughout the duration of the algorithm,
    \item any coloring that the algorithm discovers is valid, and
    \item any valid coloring can be discovered by the algorithm.
\end{itemize}

As a first step towards showing these properties, we will first present some
useful lemmas.

\begin{lemma}
    \label{lem:supervertexbagssubtree}
    If $v\in V(G)$ and $X^v \subseteq X$ is a set of super-vertices such that
    $X_i \in X^v$ if and only if $v \in V_i$, then $X^v$ forms a connected
    subtree of $T$.
\end{lemma}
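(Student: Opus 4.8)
The plan is to reduce this augmented‑subtree claim to the standard subtree property (Property~\ref{prop:treedecompsubtree}) of the tree decomposition. First I would unfold the definition of $V_i$: recall that $V_i$ consists of the vertices in $X_i$ together with all of their in‑neighbors. Thus $v \in V_i$ holds precisely when either $v \in X_i$, or there is some vertex $w \in X_i$ with $(v,w) \in E$, i.e.\ $w$ is an out‑neighbor of $v$. Writing $T_u := \{X_i \in X : u \in X_i\}$ for the set of bags containing a vertex $u$, this yields the clean decomposition
\[
  X^v = T_v \cup \bigcup_{w\,:\,(v,w)\in E} T_w .
\]

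Next I would invoke Property~\ref{prop:treedecompsubtree}, which tells us that $T_v$ and each $T_w$ is itself a connected subtree of $T$. It then remains to argue that their union is connected. The key observation here is that every peripheral subtree $T_w$ meets the central subtree $T_v$: since $(v,w) \in E$ corresponds to an edge $\{v,w\}$ of the underlying graph, Property~\ref{prop:treedecompedge} guarantees a bag $X_i$ containing both $v$ and $w$, so $X_i \in T_v \cap T_w$ and in particular $T_v \cap T_w \neq \varnothing$.

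Finally I would assemble connectivity from these intersections. Two connected subgraphs of a tree with a common vertex have connected union, so each $T_v \cup T_w$ is connected; and since every $T_w$ shares a vertex with the fixed subtree $T_v$, any two bags in $X^v$ can be joined by a path that routes through $T_v$ while staying inside the union. Hence $X^v$ is connected, and being a connected subgraph of the tree $T$ it is a subtree, as required. The main obstacle is purely this connectivity‑of‑union step; the degenerate case in which $v$ has no out‑neighbors reduces immediately to Property~\ref{prop:treedecompsubtree} applied to $T_v$ alone.
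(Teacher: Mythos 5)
Your proof is correct, and it is organized differently from the paper's. The paper proves connectivity directly: it takes two arbitrary super-vertices $X_a, X_b \in X^v$ and shows by a three-way case split (both bags contain $v$, exactly one does, neither does) that every bag on the unique $X_a$--$X_b$ path also lies in $X^v$, invoking Properties~\ref{prop:treedecompedge} and~\ref{prop:treedecompsubtree} inside each case to trace where $v$ and the relevant out-neighbors $w, w'$ live along that path. You instead exhibit the clean decomposition $X^v = T_v \cup \bigcup_{w:(v,w)\in E} T_w$, observe that each $T_w$ meets $T_v$ by the edge-coverage property, and finish with the general fact that a union of connected subtrees all meeting a fixed connected subtree is connected. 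Both arguments rest on exactly the same two axioms of tree decompositions, but yours replaces the paper's most delicate case (neither bag contains $v$, which forces the path to pass through $T_w$, then $T_v$, then $T_{w'}$) with a single structural principle, and is arguably easier to verify; the paper's version has the minor virtue of being fully self-contained path-tracing with no appeal to a union-connectivity fact. One small point worth making explicit in your write-up: $T_v \neq \varnothing$ by Property~\ref{prop:treedecompvertex}, which is needed for $T_v$ to serve as the common hub.
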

\begin{proof}
    Take two super-vertices $X_a$ and $X_b$ such that $v\in V_a$ and $v\in
    V_b$. Now we have a few cases:
    \begin{itemize}
        \item If $v\in X_a$ and $v\in X_b$, then by property
            \ref{prop:treedecompsubtree} of Definition~\ref{def:treedecomp},
            $v\in X_c$, and hence $v\in V_c$, for all super-vertices $X_c$ on
            the unique path from $X_a$ to $X_b$.
        \item If $v\in X_a$ and $v\not\in X_b$, then there must exist a vertex
            $w\in X_b$ such that $(v,w)\in E(G)$. By properties
            \ref{prop:treedecompedge} and \ref{prop:treedecompsubtree} of
            Definition~\ref{def:treedecomp} there must exist a super-vertex
            $X_c$ such that $v\in X_d$ for all super-vertices $X_d$ on the
            unique path from $X_a$ to $X_c$, and $w\in X_d$ for all
            super-vertices $X_d$ on the unique path from $X_c$ to $X_b$.
            Therefore $v\in V_d$ for all $X_d$ on the unique path from $X_a$ to
            $X_b$.

            The case where $v\not\in X_a$ and $v\in X_b$ is analogous.

        \item If $v \not\in X_a$ and $v\not\in X_b$, then there must exist a
            vertex $w\in X_a$ such that $(v,w)\in E(G)$ and a vertex $w'\in
            X_b$ such that $(v,w') \in E(G)$. By properties
            \ref{prop:treedecompedge} and \ref{prop:treedecompsubtree} of
            Definition~\ref{def:treedecomp} there must exist two super-vertices
            $X_c$ and $X_d$ (possibly the same) on the unique path from $X_a$
            to $X_b$ such that $w\in X_e$ for all $X_e$ on the unique path
            between $X_a$ and $X_c$ (inclusive), $v \in X_e$ for all $X_e$ on
            the unique path between $X_c$ and $X_d$ (inclusive), and $w'\in
            X_e$ for all $X_e$ on the unique path between $X_d$ and $X_b$.
            Therefore each super-vertex $X_e$ on the unique path between $X_a$
            and $X_b$ either contains $v$, $w$ or $w'$, and so $v\in X_e$.
    \end{itemize}


    In any case $v\in V_c$ for all super-vertices $X_c$ on the unique path from
    $X_a$ to $X_b$, which gives us the lemma.
\end{proof}

\begin{lemma}
    \label{lem:uniquepainter}
    For a given vertex $v\in V(G)$, there is a unique super-vertex $P_v$ that
    decides the color of $v$.
\end{lemma}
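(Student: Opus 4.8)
The plan is to make precise what it means for a super-vertex to \emph{decide} the color of $v$, and then to identify that super-vertex as the unique top-most element of a connected subtree, invoking Lemma~\ref{lem:supervertexbagssubtree}. First I would observe, from the structure of the \textsc{Color} function, that within a call \textsc{Color}($X_i,\bar{c}$) the color of a vertex $v$ is chosen freshly (ranging over all of $\{1,\ldots,k+1\}$) precisely when $v\in V_i\setminus V_p$, where $X_p$ is the parent of $X_i$; for $v\in V_i\cap V_p$ the color is instead forced to equal $\bar{c}(v)$, i.e.\ inherited from the parent. Thus the natural definition is: $X_i$ \emph{decides} the color of $v$ exactly when $v\in V_i$ but $v\notin V_p$. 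The whole lemma then reduces to showing that exactly one super-vertex satisfies this condition.

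The key structural input is Lemma~\ref{lem:supervertexbagssubtree}, which tells us that $X^v=\{X_i : v\in V_i\}$ forms a connected subtree of $T$. Since $T$ is rooted at $X_1$, a nonempty connected subtree has a unique vertex of minimum depth (closest to the root); I would let $P_v$ denote that vertex. Existence of $P_v$ follows because $v\in V(G)$ guarantees $X^v$ is nonempty (indeed $v$ lies in some bag $X_i$ by property~\ref{prop:treedecompvertex} of Definition~\ref{def:treedecomp}, and then $v\in V_i$), and uniqueness of the minimum-depth vertex in a subtree is immediate.

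Next I would verify that $P_v$ decides $v$ and that no other super-vertex does. For $P_v$ itself: either $P_v$ is the root, in which case it trivially colors $v$ freshly in the call \textsc{Color}($X_1,\varnothing$); or $P_v$ has a parent $X_p$, and since $P_v$ is the top-most element of the connected subtree $X^v$, this parent cannot lie in $X^v$, so $v\notin V_p$ and $v$ is colored freshly at $P_v$. For any other $X_i\in X^v$, connectivity of $X^v$ together with $P_v$ being its unique top-most vertex forces the parent $X_p$ of $X_i$ to also lie in $X^v$, so $v\in V_p$ and the color of $v$ is inherited rather than decided. Finally, any super-vertex outside $X^v$ does not contain $v$ in its $V_i$ at all, so it neither decides nor touches the color of $v$. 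Combining these cases yields that $P_v$ is the unique deciding super-vertex.

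I expect the only genuine subtlety to be the bookkeeping around the notion of ``decides,'' namely recognizing that the freshly-chosen-versus-inherited dichotomy is governed exactly by membership in $V_i\setminus V_p$; once this is aligned with the consistency constraint $c(v)=\bar{c}(v)$ in the algorithm, the remainder is a direct application of the connected-subtree property and the fact that a rooted subtree has a unique apex.
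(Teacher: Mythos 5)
Your proposal is correct and follows essentially the same route as the paper's proof: both characterize ``decides'' via the condition $v\in V_i$ and $v\notin V_p$ from line~\ref{line:forconsistentcolor}, and both identify $P_v$ as the unique root (top-most vertex) of the connected subtree $X^v$ furnished by Lemma~\ref{lem:supervertexbagssubtree}. Your write-up merely spells out the existence and uniqueness of that apex in more detail than the paper does.
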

\begin{proof}
    Looking at line \ref{line:forconsistentcolor}, we see that the color of
    a vertex $v$ is decided by a super-vertex $X_i$ if and only if $v\in V_i$
    and $v \not\in V_p$. If we consider the subtree $X^v$ given by
    Lemma~\ref{lem:supervertexbagssubtree}, we see that this condition only
    holds for the root of that subtree, and this is the required super-vertex
    $P_v$.
\end{proof}

These two lemmas will be useful when proving the remaining lemmas.

\begin{lemma}
    \label{lem:consistentcolors}
    The assignment of colors to vertices is consistent, that is, exactly one
    color is assigned to a given vertex, and this color is used throughout the
    duration of the algorithm.
\end{lemma}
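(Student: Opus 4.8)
The plan is to pin down exactly which super-vertices assign a color to a given vertex $v$ and then show they must all agree. By the definition of the coloring $c\colon V_i \to \{1,\ldots,k+1\}$ on line~\ref{line:forconsistentcolor}, a super-vertex $X_i$ assigns a color to $v$ precisely when $v \in V_i$; by Lemma~\ref{lem:supervertexbagssubtree} the collection of all such super-vertices is exactly the connected subtree $X^v$. First I would observe that $X^v$ is non-empty: property~\ref{prop:treedecompvertex} of Definition~\ref{def:treedecomp} places $v$ in some $X_i$, hence in some $V_i$. So $v$ receives at least one color, and no super-vertex outside $X^v$ ever touches $v$.

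Next I would invoke Lemma~\ref{lem:uniquepainter} to single out the super-vertex $P_v$ that first decides the color of $v$, namely the unique member of $X^v$ whose parent does not contain $v$ in its $V_p$; this is the root of the subtree $X^v$. The key structural fact to record here is that, since $T$ is rooted at $X_1$ and $X^v$ is connected, every super-vertex $X_i \in X^v$ other than $P_v$ has its parent $X_p$ also lying in $X^v$, so that $v \in V_p$ as well.

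I would then argue by induction on the distance from $P_v$ within $X^v$ that every super-vertex in $X^v$ assigns $v$ the same color that $P_v$ does. For the base case, $P_v$ fixes some color for $v$. For the inductive step, take any other $X_i \in X^v$ with parent $X_p \in X^v$; since $v \in V_i \cap V_p$, the agreement constraint ``$c(v) = \bar{c}(v)$ for $v \in V_i \cap V_p$'' on line~\ref{line:forconsistentcolor} forces the color chosen at $X_i$ to match the color $\bar{c}(v)$ handed down from $X_p$, which by the inductive hypothesis equals the color chosen at $P_v$. Combined with the first paragraph, this shows that $v$ is colored exactly by the super-vertices of $X^v$ and that all of them use the single color determined at $P_v$, which is precisely the claimed consistency.

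The main obstacle, and the step I would take most care over, is the structural claim that the parent of any non-root member of $X^v$ again lies in $X^v$. This is where the two preceding lemmas do the real work: Lemma~\ref{lem:supervertexbagssubtree} guarantees that $X^v$ is a connected subtree, and Lemma~\ref{lem:uniquepainter} identifies its unique entry point $P_v$, so that the agreement constraint on line~\ref{line:forconsistentcolor} is triggered across every edge of $X^v$ and nowhere is the color of $v$ left free to change once $P_v$ has set it.
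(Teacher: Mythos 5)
Your proposal is correct and follows essentially the same route as the paper's own proof: both rely on Lemma~\ref{lem:supervertexbagssubtree} to get the connected subtree $X^v$, on Lemma~\ref{lem:uniquepainter} to identify its root $P_v$ as the unique decider, and on the agreement constraint of line~\ref{line:forconsistentcolor} together with the coloring being passed down via the recursive call to conclude that every member of $X^v$ uses the color fixed at $P_v$. Your version merely makes explicit, as an induction on depth within $X^v$, what the paper compresses into the phrase ``color assignments are passed down the tree while they are relevant and no color assignment is overwritten.''
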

\begin{proof}
    Take any vertex $v\in V(G)$. Consider the subtree $X^v$ given by
    Lemma~\ref{lem:supervertexbagssubtree} and the super-vertex $P_v$, which is
    also the root of $X^v$, given by Lemma~\ref{lem:uniquepainter}.
    As $v$ appears in $V_i$ for all $X_i\in X^v$, and as color assignments are
    passed down the tree while they are relevant
    (line~\ref{line:recursivecall}) and because no color assignment is
    overwritten (line~\ref{line:forconsistentcolor}), it follows that the color
    assigned to $v$ by the unique $P_v$ is used in every reference to $v$
    throughout the duration of the algorithm.
\end{proof}

\begin{lemma}
    \label{lem:discoveredvalid}
    Any coloring that the algorithm discovers is valid.
\end{lemma}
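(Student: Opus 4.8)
The plan is to show that the complete assignment $c$ underlying any coloring the algorithm discovers satisfies the weighted improper constraint $d^{-}_{c[v]}(v) < 1$ at every vertex $v \in V(G)$. The structural fact I would lean on is the definition of $V_i$ as the vertices of $X_i$ together with all of their in-neighbors: this guarantees that whenever $v$ lies in the bag $X_i$, a single execution of \textsc{Color}$(X_i, \bar{c})$ commits a color to $v$ and to every vertex contributing to $d^{-}(v)$ at once, so the guard on line~\ref{line:validcoloring} can be evaluated using the full in-neighborhood of $v$.

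First I would fix an arbitrary vertex $v$ and apply property~\ref{prop:treedecompvertex} of Definition~\ref{def:treedecomp} to obtain a super-vertex $X_i$ with $v \in X_i$. Along the successful branch of the recursion that produces the discovered coloring, the corresponding call \textsc{Color}$(X_i, \bar{c})$ selects some coloring $c$ of $V_i$, and since $v \in X_i$ it must pass the guard $d^{-}_{c[v]}(v) < 1$ on line~\ref{line:validcoloring} before the recursion proceeds. Thus the \emph{locally} computed indegree is strictly below $1$.

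The crux is then to argue that this locally checked quantity equals the true weighted indegree of $v$ from its same-colored in-neighbors in the final coloring. Here I would invoke Lemma~\ref{lem:consistentcolors}: every vertex is assigned exactly one color for the entire run, so the colors that \textsc{Color}$(X_i,\bar{c})$ gives to $v$ and to each in-neighbor $u$ of $v$ (all of which lie in $V_i$) are precisely the colors they carry in the discovered coloring. Hence $d^{-}_{c[v]}(v)$ as tested at $X_i$ coincides with $v$'s genuine contribution, and the passing guard yields $d^{-}_{c[v]}(v) < 1$ in the completed coloring. As $v$ was arbitrary, the coloring is valid.

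I expect the main obstacle to be exactly this identification of the local check with the global constraint. Although the color of each in-neighbor of $v$ may be \emph{decided} at a different super-vertex --- the unique one supplied by Lemma~\ref{lem:uniquepainter} --- the facts that all these in-neighbors belong to $V_i$ and that no color is ever overwritten ensure the guard at $X_i$ sees their final colors. I would stress that restricting the check on line~\ref{line:validcoloring} to $v \in X_i$ (rather than all of $V_i$) is what makes it complete, since $V_i$ contains every in-neighbor of every bag vertex but need not contain the in-neighbors of vertices that are merely in-neighbors; this rules out the pitfall of evaluating the constraint at a vertex whose in-neighborhood is only partially colored.
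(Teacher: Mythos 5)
Your proof is correct and follows essentially the same route as the paper's: fix $v$, obtain a bag $X_i$ containing it via property~\ref{prop:treedecompvertex}, observe that the guard on line~\ref{line:validcoloring} is evaluated over all of $v$'s in-neighbors since they all lie in $V_i$, and invoke Lemma~\ref{lem:consistentcolors} to identify the locally checked colors with the final ones. The paper's version is terser; your added emphasis on why the local check coincides with the global constraint is a faithful elaboration rather than a different argument.
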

\begin{proof}
    Lemma~\ref{lem:consistentcolors} shows that the coloring is consistent. Now
    consider a vertex $v\in V(G)$. By property~\ref{prop:treedecompvertex} of
    Definition~\ref{def:treedecomp} there is a super-vertex $X_i$ that contains
    $v$. When the algorithm visits $X_i$, line~\ref{line:validcoloring} assures
    that the weighted indegree of $v$ from same-colored in-neighbors is less
    than $1$. As this holds for every vertex, it follows that any coloring
    found by the algorithm is valid.
\end{proof}

\begin{lemma}
    \label{lem:validdiscovered}
    Any valid coloring of at most $k+1$ colors can be discovered by the
    algorithm.
\end{lemma}
\begin{proof}
    Take any valid coloring $c$ of $G$ of at most $k+1$ colors. Now consider a
    vertex $v$ and the super-vertex $P_v$ given by
    Lemma~\ref{lem:uniquepainter}. When the algorithm visits $P_v$, all valid
    colorings of at most $k+1$ colors of $v$ will be explored (lines
    \ref{line:forconsistentcolor} and \ref{line:validcoloring}). In particular,
    as $c$ is a valid coloring, the coloring where $v$ has color $c(v)$ will be
    explored. As this holds for all vertices $v\in G$, the algorithm will
    discover the coloring $c$.
\end{proof}

These lemmas are enough to prove the correctness of
Algorithm~\ref{alg:boundeddegreefpa}, as is shown by
Theorem~\ref{the:fptdegree}.

\subsection{Correctness of Algorithm~\ref{alg:boundedweightsfpa}}
\label{sec:algboundedweightcorrect}

In order to argue the correctness of Algorithm~\ref{alg:boundedweightsfpa},
there are essentially three properties that need to be shown:
\begin{itemize}
    \item the assignment of colors to vertices is consistent, that is, exactly
        one color is assigned to a given vertex, and this color is used
        throughout the duration of the algorithm,
    \item any coloring that the algorithm discovers is valid, and
    \item any valid coloring can be discovered by the algorithm.
\end{itemize}

\begin{lemma}
    \label{lem:uniquepainter2}
    For a given vertex $v\in V(G)$, there is a unique super-vertex $P_v$ that
    decides the color of $v$.
\end{lemma}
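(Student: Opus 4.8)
The plan is to mirror the argument for Algorithm~\ref{alg:boundeddegreefpa} given in Lemma~\ref{lem:uniquepainter}, since the two algorithms assign colors in exactly the same way. First I would observe that in Algorithm~\ref{alg:boundedweightsfpa} the color of a vertex is only ever chosen in the \textsc{Color} function, at line~\ref{line:forconsistentcolor2}, where a coloring $c:V_i\to\{1,\ldots,k+1\}$ is enumerated subject to the constraint $c(v)=\bar c(v)$ for $v\in V_i\cap V_p$. Thus a super-vertex $X_i$ freely decides the color of $v$ precisely when $v\in V_i$ and $v\notin V_p$; for every other $v\in V_i$ the color is inherited from the coloring $\bar c$ that was passed in.

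The step that needs checking---and is really the only difference from the earlier proof---is that the interposed \textsc{Distribute} function never changes this. I would note that \textsc{Distribute}($X_p,\bar c,\bar r,i$) forwards the coloring $\bar c$ unchanged both to its recursive call on the next child and to the call $\textsc{Color}(X_i,\bar c,r)$ at line~\ref{line:recursivecall2_2}; it only manipulates the budget parameter $\bar r$. Hence no color assignment is overwritten, and the set of super-vertices that can decide $v$'s color is exactly $\{X_i : v\in V_i \text{ and } v\notin V_p\}$, the same set as in Algorithm~\ref{alg:boundeddegreefpa}.

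With this established, I would invoke Lemma~\ref{lem:supervertexbagssubtree}, which is a purely structural fact about the sets $V_i$ and does not depend on which algorithm is running, to conclude that the super-vertices $X^v$ containing $v$ in their $V_i$ form a connected subtree of $T$. In a rooted tree a connected subtree has a unique super-vertex closest to the root; every other super-vertex in $X^v$ has its parent also in $X^v$ (and hence satisfies $v\in V_p$), while the topmost one either is the root or has its parent outside $X^v$. Therefore the condition $v\in V_i$ and $v\notin V_p$ is met by exactly this topmost super-vertex, which is the required unique $P_v$. I do not expect any real obstacle here beyond confirming the \textsc{Distribute} bookkeeping; the combinatorial core is already supplied by Lemma~\ref{lem:supervertexbagssubtree}.
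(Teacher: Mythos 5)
Your proposal is correct and follows essentially the same route as the paper: identify from line~\ref{line:forconsistentcolor2} that $X_i$ decides $v$'s color exactly when $v\in V_i$ and $v\notin V_p$, then use the connected-subtree property to conclude that only the root of the subtree of super-vertices containing $v$ satisfies this. The only cosmetic difference is that the paper cites Property~\ref{prop:treedecompsubtree} of Definition~\ref{def:treedecomp} directly rather than Lemma~\ref{lem:supervertexbagssubtree}, and your extra check that \textsc{Distribute} forwards $\bar c$ unchanged is harmless (it is really needed for the consistency lemma, not for uniqueness of $P_v$).
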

\begin{proof}
    Looking at line \ref{line:forconsistentcolor2}, we
    see that the color of a vertex $v$ is decided by a super-vertex $X_i$ if
    and only if $v\in V_i$ and $v \not\in V_p$. By
    Property~\ref{prop:treedecompsubtree} of Definition~\ref{def:treedecomp},
    the super-vertices in which $v$ appear forms a connected subtree.
    Therefore there is a unique vertex fulfilling this property, the root of
    this subtree, and this is the unique super-vertex $P_v$ we wanted.
\end{proof}

\begin{lemma}
    \label{lem:consistentcolors2}
    The assignment of colors to vertices is consistent, that is, exactly one
    color is assigned to a given vertex, and this color is used throughout the
    duration of the algorithm.
\end{lemma}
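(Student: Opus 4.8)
The plan is to mirror the argument used for Lemma~\ref{lem:consistentcolors} in the analysis of Algorithm~\ref{alg:boundeddegreefpa}, adapting it to the two mutually recursive functions of Algorithm~\ref{alg:boundedweightsfpa}. First I would fix an arbitrary vertex $v \in V(G)$ and invoke Lemma~\ref{lem:uniquepainter2} to obtain the unique super-vertex $P_v$ whose call to \textsc{Color} actually decides the color of $v$. By the connectivity reasoning of Lemma~\ref{lem:supervertexbagssubtree} (ultimately resting on Property~\ref{prop:treedecompsubtree} of Definition~\ref{def:treedecomp}), the super-vertices $X_i$ for which $v \in V_i$ form a connected subtree of $T$ rooted at $P_v$, so every super-vertex that ever references $v$ is a descendant of $P_v$.

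The heart of the argument is to trace how the coloring parameter $\bar{c}$ is propagated along the alternating chain of \textsc{Color} and \textsc{Distribute} calls. I would establish two facts. First, the \textsc{Distribute} function never alters the coloring: on line~\ref{line:recursivecall2_2} it forwards $\bar{c}$ unchanged to both its own recursive call and to the \textsc{Color} call on a child, touching only the budget function $\bar{r}$. Second, whenever \textsc{Color} is entered at a descendant $X_i$ of $P_v$, the loop at line~\ref{line:forconsistentcolor2} ranges only over colorings $c$ satisfying $c(u) = \bar{c}(u)$ for every $u \in V_i \cap V_p$; since $v \in V_i$ and $v \in V_p$ hold at all such descendants, the color previously assigned to $v$ is never overwritten, and it is carried one level deeper when \textsc{Color} invokes \textsc{Distribute} on line~\ref{line:recursivecall2}.

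Combining these observations, the color that $P_v$ assigns to $v$ is the only color ever associated with $v$, and it is preserved along every path of recursive calls inside the subtree rooted at $P_v$. As $v$ was arbitrary, this yields that exactly one color is assigned to each vertex and that this color is used throughout the execution of the algorithm.

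I expect the main point to watch, rather than a genuine obstacle, to be the bookkeeping around the two interleaved functions: one must confirm that \textsc{Distribute} is a transparent pass-through for the coloring and that the inclusion $v \in V_i \cap V_p$ indeed holds at each descendant of $P_v$ in the subtree. Both follow directly from the connectivity of the subtree supplied by Lemma~\ref{lem:supervertexbagssubtree} together with the explicit constraints on lines~\ref{line:forconsistentcolor2} and~\ref{line:recursivecall2_2}, so once these are made explicit the conclusion follows exactly as in the bounded-degree case.
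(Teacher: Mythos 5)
Your proposal is correct and follows essentially the same route as the paper's own proof: fix $v$, obtain the unique deciding super-vertex $P_v$ from Lemma~\ref{lem:uniquepainter2}, use the connected-subtree property to see that all super-vertices referencing $v$ lie below $P_v$, and observe that the coloring is passed down unchanged through lines~\ref{line:recursivecall2} and~\ref{line:recursivecall2_2} and never overwritten by line~\ref{line:forconsistentcolor2}. Your explicit observation that \textsc{Distribute} is a transparent pass-through for $\bar{c}$ is just a slightly more detailed spelling-out of the same argument.
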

\begin{proof}
    Consider any vertex $v\in V(G)$. By Lemma~\ref{lem:uniquepainter2} there is
    a unique super-vertex $P_v$ that decides the color of $v$. By
    Property~\ref{prop:treedecompsubtree} of Definition~\ref{def:treedecomp}
    the set $X^v$ of super-vertices that contain $v$ form a connected subtree
    of $T$. Notice that $P_v$ must be the root of this subtree. As $v$ appears
    in $V_i$ for all $X_i\in X^v$, and as color assignments are passed down the
    tree while they are relevant (lines~\ref{line:recursivecall2} and
    \ref{line:recursivecall2_2}) and because no color assignment is overwritten
    (line~\ref{line:forconsistentcolor2}), it follows that the color assigned
    to $v$ by the unique $P_v$ is used in every reference to $v$ throughout the
    duration of the algorithm.
\end{proof}

\begin{lemma}
    \label{lem:discoveredvalid2}
    Any coloring that the algorithm discovers is valid.
\end{lemma}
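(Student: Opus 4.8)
The plan is to derive validity from two facts: the coloring is consistent, and the running budget maintained by the algorithm faithfully bounds the same-colored in-weight of every vertex. Since Lemma~\ref{lem:consistentcolors2} already guarantees that each vertex receives exactly one color, used throughout, it suffices to prove that for every vertex $v$ we have $d^{-}_{c[v]}(v) < 1$. The key observation is that the value $r(v)$ is the remaining budget of $v$: it is initialized to $1-\epsilon$ at the super-vertex where $v$ is first colored (line~\ref{line:assignepsilon}) and is decremented by $w(u,v)$ exactly when a same-colored in-edge $(u,v)$ is accounted for (line~\ref{line:subtractweight}), while the test $r(v)\ge 0$ on line~\ref{line:checknonnegative} forbids the cumulative deduction from exceeding that initial budget.

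First I would show that every same-colored in-edge $(u,v)$ of $G$ is charged to the budget of $v$ exactly once. By Property~\ref{prop:treedecompedge} of Definition~\ref{def:treedecomp} some super-vertex contains both $u$ and $v$, and the super-vertices containing both endpoints form a connected subtree (the intersection of the two subtrees guaranteed by Property~\ref{prop:treedecompsubtree}). The guard on line~\ref{line:subtractweight}, which requires $\{u,v\}\subseteq X_i$ but forbids both endpoints from lying in $X_p$, therefore fires at precisely the root of that subtree, so $w(u,v)$ is deducted once and only once; by consistency the colors used at that deduction are the final colors, so the deduction occurs exactly for the same-colored in-edges.

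Next I would set up a downward induction over $T$ expressing budget conservation: for every call, the budget held by a vertex $v$ (namely $1-\epsilon$ if $v$ is newly colored there, or the inherited $\bar r(v)$ otherwise) is an upper bound on the total weight of same-colored in-edges to $v$ that are charged anywhere in the subtree solved by that call. The inductive step uses that \textsc{Color} first deducts the new within-$X_i$ edges to obtain $r$ and then hands $r$ to \textsc{Distribute}, which merely partitions $r$ among the children by allocating a share $r$ to the current child and the remainder $\bar r - r$ to the siblings (line~\ref{line:recursivecall2_2}); since no child ever receives more than its share, the charges over the descendants telescope and the bound propagates. Instantiating the invariant at the unique super-vertex $P_v$ of Lemma~\ref{lem:uniquepainter2}, where the budget is $1-\epsilon$ and, by the charging argument, all same-colored in-edges to $v$ are charged within its subtree, gives $d^{-}_{c[v]}(v)\le 1-\epsilon<1$.

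I expect the conservation step to be the main obstacle: one must check that the \textsc{Distribute} splitting neither loses nor double-counts a charge across sibling subtrees, so that the per-child bounds combine into the single global bound $1-\epsilon$ at $P_v$. This again rests on the subtree property of the decomposition — a given in-edge $(u,v)$ is confined to a single child branch, so its charge falls within the one subtree that receives the matching share of the budget — together with the fact that the guards on lines~\ref{line:subtractweight} and~\ref{line:checknonnegative} are tested against $X_i$, matching exactly the decomposition's coverage of edges and thereby aligning the bookkeeping with the definition of $d^{-}_{c[v]}(v)$.
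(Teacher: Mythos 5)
Your proposal is correct and follows essentially the same route as the paper's proof: consistency from Lemma~\ref{lem:consistentcolors2}, the observation that each same-colored in-edge $(u,v)$ is charged exactly once at the root of the connected subtree of super-vertices containing both endpoints, and the non-negativity check on line~\ref{line:checknonnegative} bounding the total charge by the initial budget $1-\epsilon$. Your explicit inductive conservation argument for how \textsc{Distribute} partitions the budget among child subtrees is in fact more careful than the paper, which asserts this telescoping implicitly; it is a welcome addition rather than a deviation.
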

\begin{proof}
    Lemma~\ref{lem:consistentcolors2} show that the coloring is consistent. Now
    consider a vertex $v\in V(G)$, and the connected subtree of super-vertices
    that contain $v$ given by Property~\ref{prop:treedecompsubtree} of
    Definition~\ref{def:treedecomp}. When the algorithm visits the super-vertex
    that is the root of this subtree, the vertex $v$ is assigned a budget of $1-\epsilon$
    (line~\ref{line:assignepsilon}). Now consider a vertex $u$ that has an
    outgoing edge to $v$. By the same property as before, the super-vertices
    that contain $u$ form a connected subtree. Furthermore, the set of
    super-vertices that contain both $u$ and $v$ also form a connected subtree.
    Looking at line~\ref{line:subtractweight} we see that, if $u$ and $v$ have
    the same colors, the weight of this edge is subtracted from the budget of
    $v$ at the unique root of the subtree of super-vertices containing both $u$
    and $v$. So the weight of each incoming edge to $v$ of a same-colored
    vertex is subtracted exactly once from $v$'s budget. As
    line~\ref{line:checknonnegative} checks that this quantity never becomes
    negative, it follows that $v$ has weighted indegree from same-colored
    vertices that is less than $1$, so the coloring is valid.
\end{proof}

\begin{lemma}
    \label{lem:validdiscovered2}
    Any valid coloring of at most $k+1$ colors can be discovered by the
    algorithm.
\end{lemma}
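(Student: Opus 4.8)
The plan is to mirror the proof of Lemma~\ref{lem:validdiscovered}, handling the color choices exactly as there, and then to supply the one genuinely new ingredient: a witness budget distribution under which the target coloring survives every nonnegativity check in the \textsc{Distribute}/\textsc{Color} recursion. First I would take an arbitrary valid coloring $c$ of $G$ using at most $k+1$ colors and, for each vertex $v$, invoke Lemma~\ref{lem:uniquepainter2} to locate the unique super-vertex $P_v$ that decides $v$'s color. As in the bounded-degree case, the loop on line~\ref{line:forconsistentcolor2} enumerates every coloring consistent with the parent's partial coloring $\bar{c}$, so when the recursion reaches $P_v$ the branch assigning $v$ the color $c(v)$ is among those explored; since $c$ is globally consistent, these choices agree along every branch and $c$ itself is among the color assignments the algorithm attempts.

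The crux is then to exhibit budgets under which this sequence of color choices is never rejected on line~\ref{line:checknonnegative}. I would define, for each vertex $v$ and each super-vertex $X_i$ with $v \in V_i$, a target budget equal to $1-\epsilon$ minus the total weight of those same-colored in-edges to $v$ (under $c$) whose unique charging super-vertex lies in the subtree rooted at $X_i$. Here I rely on the structural fact already used in Lemma~\ref{lem:discoveredvalid2}: by Property~\ref{prop:treedecompsubtree} of Definition~\ref{def:treedecomp} the super-vertices containing both endpoints $u$ and $v$ of an edge form a connected subtree, so each same-colored in-edge of $v$ is subtracted from $v$'s budget exactly once, at the root of that subtree (line~\ref{line:subtractweight}). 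Carried out top-down from the root, the induction then claims that at each call the \textsc{Distribute} function can pick the splitting $r$ that hands each child subtree precisely the budget its same-colored in-edges will consume.

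The hard part will be verifying that this splitting is actually a legal choice of $r$ in the loop of \textsc{Distribute}, i.e.\ that it is expressible within $R$ and respects $0 \le r(v) \le \bar{r}(v)$ with the allocations to the successive children summing to no more than the inherited budget. Nonnegativity and expressibility follow because the weights, and hence all their partial sums, are $b$-bit fixed precision reals and the initial budget $1-\epsilon$ is itself representable in $R$. The upper bound is where validity of $c$ enters: the total weight charged to $v$ across all of its descendant subtrees is exactly $d^{-}_{c[v]}(v)$, and since $c$ is valid this is strictly less than $1$, hence at most $1-\epsilon$ by the bounded precision, so the fresh budget of $1-\epsilon$ assigned on line~\ref{line:assignepsilon} suffices and is never overspent. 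With this budget distribution fixed, every nonnegativity test passes, the recursion descends all the way to the leaves via lines~\ref{line:recursivecall2} and~\ref{line:recursivecall2_2}, and therefore the coloring $c$ is discovered by the algorithm.
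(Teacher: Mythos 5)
Your proposal is correct and follows the same skeleton as the paper's own proof: locate $P_v$ via Lemma~\ref{lem:uniquepainter2} and observe that the loop on line~\ref{line:forconsistentcolor2} enumerates the branch assigning each $v$ the color $c(v)$, exactly as in Lemma~\ref{lem:validdiscovered}. Where you go beyond the paper is the budget argument: the paper's proof simply cites lines~\ref{line:forconsistentcolor2} and~\ref{line:validcoloring2} and never exhibits a choice of $r$ in \textsc{Distribute} under which the checks on line~\ref{line:checknonnegative} all succeed, whereas you construct the witness explicitly --- hand each child subtree exactly the weight that its charged same-colored in-edges will consume, note that these amounts are representable in $R$, and observe that together with the charges made at the current super-vertex they sum to $d^{-}_{c[v]}(v) \leq 1-\epsilon$ by validity of $c$ and the bounded precision. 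That is the genuinely nontrivial content of the lemma, and your treatment of it is sound; if anything, it is a more complete proof than the one in the paper. One small slip to reconcile: in your second paragraph the ``target budget'' for $X_i$ is described as $1-\epsilon$ minus the weight charged \emph{inside} the subtree rooted at $X_i$, which is neither the amount passed to that child nor the amount remaining afterwards (summed over several children it could exceed the inherited budget); your third paragraph states the correct witness (each child receives precisely the amount its subtree will consume), and it is that version on which the verification of $0 \leq r(v) \leq \bar{r}(v)$ actually rests.
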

\begin{proof}
    Take any valid coloring $c$ of $G$ of at most $k+1$ colors. Now consider a
    vertex $v$ and the super-vertex $P_v$ given by
    Lemma~\ref{lem:uniquepainter2}. When the algorithm visits $P_v$, all valid
    colorings of at most $k+1$ colors of $v$ will be explored (lines
    \ref{line:forconsistentcolor2} and \ref{line:validcoloring2}). In
    particular, as $c$ is a valid coloring, the coloring where $v$ has color
    $c(v)$ will be explored. As this holds for all vertices $v\in G$, the
    algorithm will discover the coloring $c$.
\end{proof}

These lemmas are enough to prove the correctness of
Algorithm~\ref{alg:boundedweightsfpa}, as is shown by
Theorem~\ref{the:fptweight}.

\end{document}